\documentclass[preprint,12pt]{elsarticle}
\usepackage[utf8]{inputenc}
\usepackage{amsthm}
\usepackage{amsmath}
\usepackage{amsfonts}
\usepackage{amssymb}
\usepackage{graphicx}
\usepackage{epstopdf}
\usepackage{color}
\usepackage{multirow}
\usepackage{mathtools}
\usepackage{xcolor}
\usepackage{a4wide}
\usepackage{bm}
\usepackage{hyperref}
\usepackage{caption}
\usepackage{subcaption}
\usepackage[
  separate-uncertainty = true,
  multi-part-units = repeat
]{siunitx}
\usepackage{rotating}
\usepackage{verbatim}

\usepackage{tikz}
\tikzstyle{noeud} = [circle, draw, fill=white, inner sep=2pt]
\newtheorem{theorem}{Theorem}[section]
\newtheorem{lemma}{Lemma}[section]

\newdefinition{remark}{Remark}[section]
\newtheorem{observation}{Observation}[section]
\newdefinition{definition}{Definition} \newdefinition{example}{Example}

\usepackage[noend,linesnumbered,ruled,lined]{algorithm2e}
\SetAlFnt{\scriptsize}
\usepackage{amsfonts}
\usepackage{makecell}

\usepackage[usenames,dvipsnames]{pstricks}
\usepackage{pstricks-add}
\usepackage{epsfig}
\usepackage{pst-grad} 
\usepackage{pst-plot} 
\usepackage[space]{grffile} 
\usepackage{etoolbox} 
\makeatletter 
\patchcmd\Gread@eps{\@inputcheck#1 }{\@inputcheck"#1"\relax}{}{}
\makeatother
\usepackage{mathtools}

\usepackage{graphicx}
\usepackage{tabularx}
\usepackage{textcomp}
\usepackage{xcolor}
\usepackage[labelformat=simple]{subcaption}

\usepackage{booktabs}

\usepackage{pgf,tikz,pgfplots}
\pgfplotsset{compat=1.15}
\usepackage{mathrsfs}
\usepackage{hyperref}
\usepackage{multicol}
\SetAlFnt{\scriptsize}
\usepackage{amsfonts}
\usepackage{makecell}
\usepackage{bm}

\usepackage{multirow}
\usepackage{array}
\usepackage{makecell}
\usepackage{graphicx}
\newcolumntype{P}[1]{>{\raggedright\arraybackslash}p{#1}}


\SetCommentSty{mycommfont}

\bibliographystyle{elsarticle-num}
\usepackage{a4wide}
\begin{document}
\usetikzlibrary{arrows}
\begin{frontmatter}

\title{Black Hole Search in Dynamic Graphs}

            


\author[inst1]{Tanvir Kaur}
\ead{tanvir.20maz0001@iitrpr.ac.in}

\author[inst1]{Ashish Saxena}
\ead{ashish.21maz0004@iitrpr.ac.in}

\author[inst2]{Partha Sarathi Mandal}
\ead{psm@iitg.ac.in}

\author[inst1]{Kaushik Mondal}
\ead{kaushik.mondal@iitrpr.ac.in}

\affiliation[inst1]{organization={Department of Mathematics},
            addressline={Indian Institute of Technology Ropar}, 
            city={Rupnagar},
            postcode={140001}, 
            state={Punjab},
            country={India}}
            
\affiliation[inst2]{organization={Department of Mathematics},
            addressline={Indian Institute of Technology Guwahati}, 
            city={Guwahati},
            postcode={781039}, 
            state={Assam},
            country={India}}


 \begin{abstract}
    A black hole is considered to be a dangerous node present in a graph that disposes of any resources that enter that node. Therefore, it is essential to find such a node in the graph. Let a group of agents be present on a graph $G$. The Black Hole Search (BHS) problem aims for at least one agent to survive and terminate after {finding} the black hole. This problem is already studied for specific dynamic graph classes such as rings, cactuses, and tori {where finding the black hole means at least one agent needs to survive and terminate after knowing at least one edge associated with the black hole. In this work, we investigate the problem of BHS for general graphs.} In the dynamic graph, adversary may remove edges at each round keeping the graph connected. We consider two cases: (a) at any round at most one edge can be removed (b) at any round at most $f$ edges can be removed. For both scenarios, we study the problem when the agents start from a rooted initial configuration. We consider each agent has $O(\log n)$ memory and each node has $O(\log n)$ storage.

    \noindent For case (a), we present an algorithm with $9$ agents that solves the problem of BHS in $O(|E|^2)$ time where $|E|$ is the number of edges and $\delta_v$ is the degree of the node $v$ in $G$. We show it is impossible to solve for $2\delta_{BH}$ many agents starting from an arbitrary configuration where $\delta_{BH}$ is the degree of the black hole in $G$. We also provide another improved algorithm that uses $6$ agents from a rooted initial configuration to solve the problem of BHS.

    \noindent For case (b), we provide an algorithm using $6f$ agents to solve the problem of BHS, albeit taking exponential time. We also provide an impossibility result for $2f+1$ agents starting from a rooted initial configuration. This result holds even if unlimited storage is available on each node and the agents have infinite memory.
\end{abstract}
\begin{keyword}
Dynamic graphs \sep Black Hole Search \sep Exploration \sep Mobile agents \sep Deterministic Algorithms

\end{keyword}

\end{frontmatter}

\section{Introduction}

In distributed computing, a substantial literature has delved into the computational paradigm involving mobile agents. The employment of mobile agents to address diverse global challenges through distributed means stands as a widely recognized approach in problem-solving methodologies. A mobile agent, in this context, refers to a software entity endowed with limited memory, capable of traversing a network and visiting nodes. Upon visiting a node, it engages in computation, interacting with the local environment, the memory, and the resources of the visited node. Subsequently, the agent proceeds to the next node based on the computations performed. In the mobile agent paradigm, many problems are studied, such as exploration \cite{Albers_1997, dynamic_exploration},   gathering \cite{DPP14, LunaTCS2020}, dispersion \cite{Augustine_2018, dynamic_dispersion}, etc.

Practically, inconsistency and faults are unavoidable in the network. By faults, we mean faulty agents, faulty nodes, or even dangerous graphs. From the dangerous graphs perspective, a black hole is one such issue that points to nodes where the resources get affected or destroyed. As an example, in the case of mobile computing, if an agent reaches a black hole, it may die. From the practical side, a black hole can occur when a computer is accidentally offline or a resident process (such as a virus) on a network site deletes incoming data or agents without leaving a trace. For example, in a cloud, a node that causes loss of essential data can compromise the quality of any service in that cloud and become a black hole. Similarly, any undetectable crash failure of a site in a network can turn that site into a black hole. It is therefore important to find such a node. This problem is known as the black hole search problem. It involves a team of agents whose task is to find the black hole.

The problem of black hole search is well studied in the context of static graphs \cite{Paola_2006, Paola_2004}. Usually, the underlying graph is considered to be anonymous, i.e., the nodes have no ID. However, there is a growing interest in exploring this problem in dynamic graphs, where edges can be removed as long as the underlying graph remains connected. Previous studies have concentrated on the black hole search problem in dynamic rings \cite{dyn_ring_jrnl} and dynamic cactus graphs \cite{bhattacharya_2023}. Recently, the problem is studied for the case of dynamic tori graph\cite{Adri_tori}. However, there is no research on the black hole search problem on arbitrary dynamic graphs. Our objective in this paper is to extend the study of the black hole search problem to arbitrary graphs and analyze the problem for cases when a single edge can be deleted by the adversary and when at most $f$ edges can be deleted by the adversary in each discrete time step from the graph.

\subsection{Related Work}

The black hole search problem was initially presented by Dobrev et al. \cite{Paola_2006} where they addressed the problem for a static, arbitrary topology. The problem of black hole search is extensively studied on static graphs \cite{Paola_2006, Pelc_2005, Paola_2010, Shantanu_2011, MarkouP12, complexity_black_hole, bhs_tokens}. Here in this section, we focus on the existing literature of black hole search on dynamic graphs.


 Di Luna et al. \cite{dyn_ring_jrnl} first introduced the black hole search problem on dynamic rings. Their objective is as follows. At least one agent needs to survive and terminate after knowing at least one edge associated with the black hole. In this work, the authors established lower bounds and presented size-optimal algorithms, considering factors such as the number of agents, moves, and rounds. These results are based on two communication models: face-to-face and whiteboard. They show that three agents are necessary to find the black hole in the ring when the adversary can delete at most one edge from the ring in each round. When the three agents are colocated initially, they can solve the BHS problem in the whiteboard and pebble communication model in $\Theta(n^{1.5})$ rounds even if the agents are anonymous. If agents can communicate only when they are on the same node, the authors show that the black hole search problem can be solved in $\Theta(n^2)$ moves and rounds if the agents have unique IDs and is impossible to solve if the agents are anonymous. Further, the authors investigate the case when agents begin from an arbitrary initial configuration in rings. They provide a time optimal algorithm that finds the black hole in $\Theta(n^2)$ rounds in the pebble model. They also show that it is impossible to solve BHS problem from arbitrary initial configuration only using face-to-face communication model.  

Bhattacharya et al. \cite{bhattacharya_2023} presented two algorithms for black hole search in cactus graphs. In each round, if at most one edge is deleted from the footprint $G$ provided TVG $\mathcal{G}$ remains connected, then they established a lower bound of $\Omega(n^{1.5})$ rounds and an upper bound of $O(n^2)$ rounds with the help of three agents. With four agents, a lower bound of $\Omega(n)$ rounds was provided. For cases where multiple edges can be deleted by the adversary (up to $f$ edges provided the graph remains connected), they provided the impossibility of {finding} the black hole with $f+1$ agents. However, with $f+2$ agents, they established a lower bound of $\Omega((n+2-3f)^{1.5})$ rounds. The authors in \cite{bhattacharya_2023} use $O(n\cdot\log n + n\cdot f\cdot\log f)$ storage on the whiteboard. Recently, Bhattacharya et al. \cite{Adri_tori} investigated the BHS problem in dynamic tori for both rooted and arbitrary initial configurations. In their work, the dynamicity is such that each ring is considered to be 1-interval connected, i.e., at most one edge can be missing from each ring at any round. Each node of the torus has a unique identifier. Each node has a whiteboard with constant bits of storage for each of the associated edges of that node. They provided an algorithm that solves BHS using $n+3$ agents in $O(n\cdot m^{1.5})$ rounds, where the size of tori $n\times m$ ($3\leq n \leq m$). 
\begin{table*}[h!]
\centering
\resizebox{\textwidth}{!}{%
\begin{tabular}{!{\vrule width 2pt}P{2.3cm}|P{0.6cm}|P{1.4cm}|P{1.6cm}|P{1.6cm}|P{1.7cm}|P{2.7cm}|P{1.6cm}|P{3.8cm}!{\vrule width 2pt}}
\noalign{\hrule height 2pt}
\textbf{Graph} & \textbf{IC} & \textbf{Prob.} & \textbf{\makecell{Agents\\ know\\ missing\\ edge}} & \textbf{Comm.} & \textbf{Min Agents Reqd.} & \textbf{Time} & \textbf{No. of Agents} & \textbf{Memory (Whiteboard)} \\
\noalign{\hrule height 2pt}

\multirow{4}{*}{Ring \cite{dyn_ring_jrnl}} & \multirow{2}{*}{R} & \multirow{4}{*}{1-BHS} & Yes & F2F & 3 & $\Theta(n^2)$ & 3 & - \\ \cline{4-9}
& & & Yes & Wb & 3 & $\Theta(n^{1.5})$ & 3 & $O(\log n)$ \\ \cline{2-2} \cline{4-9}
& \multirow{2}{*}{A} & & Yes & F2F & 3 & Impossible & 3 & - \\ \cline{4-9}
& & & Yes & Pebble & 3 & $\Theta(n^2)$ & 3 & - \\ 
\noalign{\hrule height 2pt}

\multirow{2}{*}{Cactus \cite{bhattacharya_2023}} & \multirow{2}{*}{R} & 1-BHS & Yes & \multirow{2}{*}{\makecell{Wb \&\\F2F}} & 3 & $O(n^2)$ & 3 & \multirow{2}{*}{$O(\delta_v(\log\delta_v+f\log f))$} \\ \cline{3-4} \cline{6-8}
& & $f$-BHS & Yes & & $f+2$ & $O(fn)$ & $2f+3$ & \\
\noalign{\hrule height 2pt}

\multirow{4}{*}{\makecell{Tori \cite{Adri_tori}\\(Size $n \times m$)}} & \multirow{2}{*}{R} & \multirow{4}{*}{\makecell{$(n+m)$-\\BHS}} & Yes & \multirow{4}{*}{\makecell{Wb \&\\F2F}} & $n+2$ & $O(nm^{1.5})$ & $n+3$ & \multirow{4}{*}{$O(\delta_v)$} \\ \cline{4-4} \cline{6-8}
& & & Yes & & & $O(nm)$ & $n+4$ & \\ \cline{2-2} \cline{4-4} \cline{6-8}
& \multirow{2}{*}{A} & & Yes & & $n+3$ & $O(nm^{1.5})$ & $n+6$ & \\ \cline{4-4} \cline{6-8}
& & & Yes & & & $O(nm)$ & $n+7$ & \\
\noalign{\hrule height 2pt}

\multirow{5}{*}{\makecell{General\\Graph\\(Our Results)}} & \multirow{3}{*}{R} & 1-BHS & No & \multirow{5}{*}{\makecell{Wb \&\\F2F}} & - & $O(|E|^2)$ & 9 &  \\ \cline{3-4} \cline{6-8}
& & $1$-BHS & No & & $-$ & $O(|E|^2)$ & $6$ & $O(\log n)$\\ \cline{3-4} \cline{6-8}
& & $f$-BHS & No & & $2f+2$ & $3\Delta^n(\Delta+1)^{2f+n}(n-1)^{2f}$ & $6f$ & \\ \cline{2-4} \cline{6-9}
& \multirow{2}{*}{A} & 1-BHS & No & & $2\delta_{BH}+1$ & - & - & $O(\log n)$ \\ \cline{3-4} \cline{6-9}
& & $f$-BHS & No & & $2f+2$ & - & - & Infinite \\
\noalign{\hrule height 2pt}
\end{tabular}
}
\caption{The table provides a summary of the existing results and our results. In col. 2, IC denotes initial configuration, and R, A denote rooted and arbitrary respectively. In col. 5, F2F and Wb denote face-to-face communication and whiteboard model respectively.}
\label{fig:literature}
\end{table*}
Further, using $n+4$ agents, they provided an algorithm that solves BHS in dynamic tori in $O(m\cdot n)$ rounds. When the agents start from an arbitrary initial configuration, they provide two algorithms. The first one uses $n+6$ agents to solve the BHS problem in $O(n\cdot m^{1.5})$ rounds. The second one uses $n+7$ agents to solve the BHS problem in $O(n\cdot m)$ rounds.

The problem of black hole search on arbitrary graphs is still an open question. In this paper, we investigate the problem of BHS on an arbitrary graph. In our model, we use $O(\log n)$ storage on the whiteboard instead of $O(\delta_v(\log\delta_v+k\cdot\log k))$ storage per node used in  \cite{bhattacharya_2023} for the rooted case. Furthermore, Bhattacharya et al.\cite{bhattacharya_2023} assume that each agent knows the missing edges at its current node at any round, whereas we consider that an agent can only determine if an edge is missing or not by attempting to move through it. The Table \ref{fig:literature} depicts a summary of existing literature and our results.

\subsection{Model and problem definition} 

\noindent \textbf{Dynamic graph model:} The system is represented as a time-varying graph (TVG), denoted by $\mathcal{G} = (V, E, \mathbb{T}, \rho)$, where, $V$ represents a set of nodes, $E$ represents a set of edges, $\mathbb{T}$ represents the temporal domain, which is defined to be $\mathbb{Z}^+$ as in this model we consider discrete time steps, and $\rho: E \times \mathbb{T} \rightarrow \{0, 1\}$ tells whether a particular edge is available at a given time. The graph $G = (V, \,E)$ is the underlying static graph of the TVG $\mathcal{G}$, also termed as the footprint of $\mathcal{G}$, where the number of nodes is $|V|=n$ and $|E|$ denotes the number of edges in $G$. We denote $\delta_v$ as the degree of node $v$ in the footprint $G$, and $\Delta$ as the maximum degree in the footprint $G$. The graph $G$ is undirected and unweighted. The nodes of $G$ are anonymous (i.e., they have no IDs), and each node $v$ is equipped with a whiteboard of size $O(\log n)$. The graph $G$ is port-labeled, i.e., each edge incident on a node $v$ is assigned a locally unique port number in $\{0, 1, 2, 3,\cdots, \delta_v-1\}$. Specifically, each undirected edge $(u,\,v)$ has two labels, denoted as its port numbers at $u$ and $v$. Port numbering is local, i.e., there is no relationship between the port numbers at $u$ and $v$. In graph $G$, there is a black hole. If any agent visits such a dangerous node, then it disposes off the agent without leaving any trace of its existence. We denote the black hole node by $BH$ and its degree by $\delta_{BH}$, however, agents do not have any prior knowledge about the black hole. 
We call the remaining nodes of the graph $G$ safe nodes. In our dynamic graph model, the adversary has the ability to remove edge(s) from $G$ at any particular time step, provided the graph remains connected at each round.

\vspace{0.2cm}
\noindent\textbf{Agent model}: We consider $k$ agents to be present initially at safe nodes of the graph $G$. Each agent has a unique identifier assigned from the range $[1, n^c]$, where $c > 1$ is some constant unknown to the agents. Each agent knows its ID. Agents do not have any prior knowledge of $k$, or any other graph parameters like, $n$, $\delta_{BH}$, $\Delta$. The agents are equipped with $O(\log n)$ memory. They have access to the whiteboard and have the capability to read, write, or delete the written information on it. When an agent visits a node, it knows the degree of that node in the footprint $G$. However, at a node $v$, an agent can not sense if any edge at $v$ is missing. To be precise, say an agent $a$ reaches a node $v$ at round $t$ and $e_v$ be an edge associated with $v$, then the agent $a$ can not determine $\rho(e_v,\,t)$. It can only understand this by traversing through the edge. If it tries to move through an edge and is unable to do so, it understands that the edge is missing. In other words, it understands that $\rho(e_v,\,t)=0$ at the beginning of the next round $t+1$. Otherwise ($\rho(e_v,\,t)=1$), it can successfully move through that edge. An agent learns the port number through which it enters a node. If two agents move via an edge, they can not see each other during the movement.

\vspace{0.2cm}
Our algorithm runs in synchronous rounds. In each round, an agent performs one \textit{Look-Compute-Move} (LCM) cycle at a safe node:
\begin{itemize}
    \item Look: The agent reads the contents of the whiteboard of the node it occupies and sees if there are other agents at the same node. Each agent can read the parameters of other agents, which are present at the same node. The agent also understands whether it had a successful or an unsuccessful move in the last round.
    \item Compute: On the basis of the information obtained in the Look phase, the agent decides whether to move through the port or not in this round. The agent may write some information on the whiteboard as per the algorithm.
    \item Move: In compute, if an agent computes some port $p$ to move, it tries to move through the port $p$; if the corresponding edge is present in that round, the agent reaches the adjacent node; otherwise, it remains at the current node.
\end{itemize}
The time complexity of the algorithm is defined as the number of rounds until an agent finds the black hole in the graph $G$.

\vspace{0.1cm}

\begin{definition}
    \textbf{$(f$-BHS$)$} Let $\mathcal{G}$ be a TVG. There is a black hole present in the footprint $G$. Let $k$ agents be located arbitrarily at safe nodes of $G$. The agents have no information about the black hole and missing edges in $G$. Agents do not know the parameters $n$, $k$, $\Delta$ and $\delta_{BH}$. At each round, the adversary can remove at most $f$ edges from the graph $G$, provided the graph $\mathcal{G}$ remains connected. At least one agent needs to {find} the black hole in the footprint $G$ of $\mathcal{G}$ and terminate.
\end{definition}
\noindent If $f=1$ in the $f$-BHS problem, we denote it as \textbf{$1$-BHS} problem.

\vspace{0.1cm}

Our problem definition is the same as the black hole search problem on dynamic rings \cite {dyn_ring_jrnl}, dynamic cactuses \cite{bhattacharya_2023} and dynamic tori \cite{Adri_tori}, where at least one agent needs to survive and terminate after knowing at least one edge associated with the black hole. Here we study on arbitrary graphs and provide algorithms starting from rooted initial configurations. In the next section, we discuss our contribution in detail.

\subsection{Our contribution}
We study both $1$-BHS and $f$-BHS from a rooted initial configuration on a much weaker model compared to the ones used in \cite{dyn_ring_jrnl, bhattacharya_2023, Adri_tori}. Our model is weaker for the following reason. In our model, the agents cannot detect whether an edge associated with its current node $v$ in $G$ is missing or not in any particular round $t$ while doing look, compute, and move in that round $t$. The agent can understand whether the movement was successful (the edge was present) or unsuccessful (the edge was missing) during the look of the round $t+1$. However, according to the model of \cite{dyn_ring_jrnl, bhattacharya_2023, Adri_tori}, an agent while doing look at round $t$ being on a node $v$, understands either no associated edge to the node $v$ is missing in this round or the exact port number corresponding to which one edge associated to the node $v$ is missing. In other words, in  \cite{dyn_ring_jrnl, bhattacharya_2023, Adri_tori}, the agent computes on a locally static structure with respect to its current position, and hence the move in each round is always successful.

\vspace{0.2cm}
\noindent We obtain the following results for the $1$-BHS problem.

\begin{itemize}
    \item  We prove the impossibility of solving $1$-BHS with $2\delta_{BH}$ many agents arbitrarily placed on safe nodes of $G$ (arbitrary configuration), provided that the agents have $O(\log n)$ memory and the nodes have a whiteboard of $O(\log n)$ storage. (Refer to Section \ref{sec:imp_1bhs}).
    \item  We provide an algorithm to solve $1$-BHS with $9$ co-located agents at a safe node of $G$ (rooted configuration) in $O(|E|^2)$ time. The agents require $O(\log n)$ memory, and each node is equipped with a whiteboard of storage $O(\log n)$. (Refer to Section \ref{sec:1-dynamic}).
    \item  We provide an improved algorithm with respect to number of agents by solving $1$-BHS with only $6$ co-located agents at a safe node of $G$ (rooted configuration) in $O(|E|^2)$ time. The agents require $O(\log n)$ memory, and each node is equipped with a whiteboard of storage $O(\log n)$. (Refer to Section \ref{sec:algo_6agents}).
\end{itemize}

\noindent We obtain the following results for the $f$-BHS problem.
\begin{itemize}
    \item  We prove the impossibility of solving $f$-BHS with $2f+1$ agents co-located at a safe node of $G$ (rooted configuration). The result holds even if the agents have infinite memory and the nodes have infinite storage. Since rooted configuration is a special case of arbitrary configuration, this impossibility holds for arbitrary configuration as well. (Refer to Section \ref{sec:imp_fbhs}).
    \item  We provide an algorithm that solves $f$-BHS from a rooted configuration using $6f$ agents with $O(\log n)$ memory with each agent. Each node of the graph is equipped with a whiteboard of storage $O(\log n)$. The time taken by the agents to solve BHS is $3\Delta^n(\Delta+1)^{2f+n}(n-1)^{2f}$ rounds. (Refer to Section \ref{sec:f dynamic}).
\end{itemize}

 In Section \ref{Exp:1}, we provide an algorithm with 3 co-located agents in $G$ that solves exploration in $O(|E|^2)$ rounds, assuming the adversary removes at most one edge from $G$ in each round. In Section \ref{EXP:2}, we show that the exploration can also be accomplished with just 2 co-located agents within the same asymptotic round complexity under the same adversarial condition.

A summary of our contribution to solve BHS is given in the Table \ref{fig:literature}.

\subsection{Preliminaries}\label{sec:pre}
The exploration problem involves ensuring that every node in a graph is visited by one or more mobile computational entities known as agents. Visits can occur a finite number of times (referred to as exploration with termination) or can happen infinitely (known as perpetual exploration). If a group of agents can explore a dynamic graph, then the same exploration strategy can also be helpful in identifying an unsafe node within the graph. Hence, we begin with an exploration strategy over a graph that does not have any black hole.

In Section \ref{sec:1-dynamic}, we employ a technique based on Depth First Search (DFS) traversal to solve the $1$-BHS problem. For completeness, we discuss this standard DFS-based exploration technique: DFS consists of two fundamental states: $explore$ and $backtrack$. Initially, the agent is in the $explore$ state. In this state, when the agent is at node $v$, it selects an unexplored edge (if available) and traverses it to a new node. If the agent arrives at a node that has already been visited, or if there are no unexplored edges remaining at node $v$, the agent then transitions to the $backtrack$ state and returns to the previous node in its traversal path. This process continues recursively until all reachable nodes have been explored. The following theorem is on the runtime for exploring a static graph using DFS.

\begin{theorem}\label{Thm:DFS}
    \cite{Das_2019} Any static graph $G$ with $|E|$ many edges can be explored by an agent within $4|E|$ rounds using DFS traversal.
\end{theorem}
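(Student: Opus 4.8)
The plan is to bound the length of the walk performed by a single DFS agent and to note that, in the synchronous model described above, each edge traversal costs exactly one round. First I would make the DFS structure explicit: while the agent runs, the edges along which it first enters a previously unvisited node form a spanning tree $T$ of $G$ rooted at the start vertex (call them \emph{tree edges}), and every remaining edge of $G$ is a \emph{non-tree edge}. To realise the two primitives used by the procedure described before the theorem --- ``pick the next unexplored edge at $v$'' and ``detect that the just-reached node is already visited'' --- the agent marks, on each node's whiteboard (resp.\ via the marking mechanism of \cite{Das_2019}), which incident ports have already been used and the fact that the node has been visited together with the port pointing to its parent in $T$. Establishing that these primitives can be implemented correctly on an anonymous graph is really the only subtle point, and it is exactly what the cited result guarantees; here I take it for granted and concentrate on the round count.

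Next I would count traversals edge by edge, from the point of view of each endpoint. Fix a node $v$ and an incident edge $e=(v,w)$. The agent can cross $e$ in the direction \emph{leaving} $v$ only in two situations: (i) it is in the \emph{explore} state at $v$ and selects $e$ as the next unexplored edge --- this occurs at most once, since $e$ is recorded as explored at $v$ immediately afterwards; or (ii) it is in the \emph{backtrack} state at $v$ and $e$ is the edge through which it most recently entered $v$ --- and a fixed edge $e$ is used for such a return at most once (the agent enters $v$ through a given non-tree edge, or through its parent edge, at most once over the whole run, and it never backtracks ``down'' a child edge). Hence $e$ is crossed at most twice away from $v$, and symmetrically at most twice away from $w$, so at most four times in total. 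Summing over all edges of $G$, the entire DFS walk has length at most $4|E|$; since $G$ is connected the walk visits every node (standard correctness of DFS), and since one traversal is performed per round, the agent terminates its exploration within $4|E|$ rounds.

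The main obstacle is therefore not the counting, which is routine, but the low-level realisation of the DFS primitives under anonymity and $O(\log n)$ storage; as the statement is quoted from \cite{Das_2019}, I would simply invoke that paper for this part and present the edge-by-edge argument above as a self-contained justification of the $4|E|$ figure. I would also remark that the constant can be tightened to $2|E|$ by marking each used edge as explored at \emph{both} of its endpoints, so that a non-tree edge is probed only once; but $4|E|$ is all that the later sections need.
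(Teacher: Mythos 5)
Your proposal is correct, but note that the paper itself offers no proof of this statement: Theorem~\ref{Thm:DFS} is imported verbatim from \cite{Das_2019} and used as a black box, so there is no internal argument to compare against. What you add is a self-contained charging argument for the $4|E|$ figure: at each endpoint $v$ of an edge $e$, at most one departure in the $explore$ state (each port is selected as an unexplored outgoing port at most once) and at most one departure in the $backtrack$ state, giving at most four crossings of $e$ and hence at most $4|E|$ moves, i.e.\ rounds, with the anonymity/marking primitives delegated to the cited paper. This matches the standard accounting (tree edges crossed twice, non-tree edges probed and immediately retracted from each side, at most four times). One small imprecision: in your case (ii) the final backtrack out of $v$ goes through the \emph{parent} port of $v$, which in general is not ``the edge through which it most recently entered $v$'' (the most recent entry is typically the return from the last child); however, your subsequent justification --- a backtrack departure from $v$ via $e$ is either an immediate retreat after entering $v$ through $e$ when $v$ was already visited, or the single final backtrack along the parent edge, and the agent never backtracks down a child edge --- already gives the needed bound of one backtrack departure per (edge, endpoint) pair, so the $4|E|$ bound is unaffected. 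Your closing remark that marking the edge at both endpoints tightens the bound to $2|E|$ is also accurate, though, as you say, nothing in the later sections requires it.
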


In \cite{Nicolo_21}, Gotoh et al. consider perpetual exploration by mobile agents on dynamic graphs whose topology is arbitrary and unknown to the agents. They focus on the solvability of the exploration of such dynamic graphs and specifically on the number of agents that are necessary and sufficient for exploration. We use this to solve $f$-BHS in Section \ref{sec:f dynamic}. Authors in \cite{Nicolo_21} use certain notations, which are as follows: $\mathcal{G}$ is $l$-bounded 1-interval connected if it is always connected and $|\overline{E_t}| \leq l$. Let $\mathcal{W}(l)$ denote the class of $l$-bounded 1-interval connected temporal graphs. Here $\overline{E_t} \;(\subseteq E)$ is the set of edges that do not appear at time $t$.

Gotoh et al. \cite{Nicolo_21} give an algorithm $\mathcal{EXPD}$ which solves the exploration problem for $l$-bounded 1-connected graphs. In the algorithm, one of the agents is designated as the leader, while the rest are the non-leader agents. Each agent knows whether it is the leader or not. We use the following theorem from \cite{Nicolo_21}  in Section \ref{sec:f dynamic}.

\begin{theorem} \label{thm:EXP_algo} \cite{Nicolo_21}
1-interval connected time-varying graphs $\mathcal{G} \in \mathcal{W}(l)$ is explored by  $2l$ agents with a leader within $\Delta^{n} (\Delta +1)^{2l+n} (n-1)^{2l}$ rounds, where $\Delta$ is the largest degree of $\mathcal{G}$. 
\end{theorem}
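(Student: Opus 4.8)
The plan is to reconstruct the algorithm $\mathcal{EXPD}$ together with its analysis, since the statement is about exploring every dynamic graph in $\mathcal{W}(l)$ by $2l$ agents with a leader. The exploration is organized as a leader-coordinated depth-first traversal of the footprint $G$, in which the non-leader agents act as \emph{sentinels} that let the team cope with the at-most-$l$ edges that are absent in any round. I would first set up the DFS skeleton exactly as behind Theorem~\ref{Thm:DFS}: using the whiteboards, the leader records at each visited node which incident ports are \emph{explored} (edge traversed and far endpoint visited) and, on backtracking, the port toward the DFS parent. In a static graph this already yields a $4|E|$-round traversal; the whole difficulty is that when the leader wants to traverse an unexplored port $p$ at its current node $v$, the edge $(v,\cdot)$ may be absent this round, and since the adversary may keep it absent for a long stretch, the leader cannot simply wait.

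The crux is a structural claim: from any configuration in which $G$ is not yet fully explored, the team of $2l$ agents can, within a bounded number of rounds, reach a strictly \emph{larger} configuration --- one in which either a new node has been visited, or some node becomes certified to have all incident ports explored. The reason $2l$ agents suffice is that at most $l$ edges are absent in any round, so probing up to $l+1$ distinct frontier edges in a round guarantees at least one success; the factor $2$ comes from pairing each probing agent with a sentinel that holds the originating node, so a successful crosser can be recovered and the team can re-aggregate. When the edge needed for a DFS backtrack is currently absent, $1$-interval connectivity guarantees that an alternate path exists in that very round, and the sentinels make it possible for the team to follow such a detour and regroup. I would then define a \emph{configuration} as the tuple, over all agents, of (current node, entry port) together with the finite whiteboard/DFS state, bound the number of such configurations by roughly $\Delta^{n}\cdot(\Delta+1)^{n}\cdot(n-1)^{2l}\cdot(\Delta+1)^{2l}$, and argue that between two consecutive ``progress'' events the team never repeats a configuration --- otherwise the adversary could force an infinite non-progressing loop, contradicting the always-connectivity assumption. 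Hence each progress event costs at most that many rounds, and since there are at most $n$ new-node events and at most $n$ node-completed events, multiplying gives the claimed bound $\Delta^{n}(\Delta+1)^{2l+n}(n-1)^{2l}$.

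For termination, the leader recognizes completion once the DFS has backtracked to the root and every visited node's whiteboard marks all incident ports as explored with the count of visited nodes stabilized; it then signals termination to the co-located agents. The runtime accounting above then upper-bounds the total number of rounds until this point.

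The step I expect to be the main obstacle is the ``no permanent deadlock'' argument: proving rigorously that the adversary cannot, via a clever schedule of the $l$ absent edges, trap the $2l$ agents inside a bounded region forever while some node stays unexplored. This is exactly where $1$-interval connectivity must be used essentially (a connecting path always exists, but it drifts over time), and where one must show that the combination of $(l+1)$-way probing and sentinel placement is precisely enough to route the team around \emph{any} set of $l$ simultaneously blocked edges. Everything else --- the DFS bookkeeping on the whiteboards, the counting of configurations, and the final multiplication to the round bound --- is comparatively routine once that invariant is in place.
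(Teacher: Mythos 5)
This statement is not proved in the paper at all: it is quoted verbatim from \cite{Nicolo_21} (the algorithm $\mathcal{EXPD}$ of Gotoh et al.), and the paper only uses it as a black box in Section \ref{sec:f dynamic}. So there is no in-paper argument to match your proposal against; what matters is whether your reconstruction stands on its own, and it does not. You yourself flag the decisive step --- the ``no permanent deadlock''/progress lemma --- as an expected obstacle and leave it unproved, but that lemma \emph{is} the theorem: everything else in your sketch (DFS bookkeeping, whiteboard marks, termination detection) is routine. The contradiction you offer for it (``between two progress events the team never repeats a configuration, otherwise the adversary could force an infinite non-progressing loop, contradicting connectivity'') is not an argument; always-connectivity does not by itself forbid the adversary from driving the system through a repeating cycle of configurations, and the specific bound $\Delta^{n}(\Delta+1)^{2l+n}(n-1)^{2l}$ is reverse-engineered from the statement rather than derived from any counting you actually justify.

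There is also a concrete numerical inconsistency in the mechanism you propose. You argue that ``probing up to $l+1$ distinct frontier edges in a round guarantees at least one success'' because only $l$ edges can be absent, and that the factor $2$ in $2l$ comes from pairing each prober with a sentinel. But pairing $2l$ agents into prober--sentinel pairs yields only $l$ simultaneous probes, and the adversary may remove exactly those $l$ edges in that round, so your own accounting does not deliver the guaranteed per-round progress on which the whole analysis rests. The actual argument in \cite{Nicolo_21} has to handle precisely this situation (agents blocked at endpoints of the $l$ missing edges, with the blocked set drifting over time), and that is where the exponential round bound comes from; a correct reconstruction would need either a genuinely different progress measure or a careful treatment of how blocked agents are released when the adversary is forced to restore connectivity, neither of which appears in the proposal.
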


\section{Impossibility for $1$-BHS}\label{sec:imp_1bhs}
In this section, we give a construction of a footprint $G$ of a TVG $\mathcal{G}$ where it is impossible to {find} the black hole in the presence of $2 \delta_{BH}$ many agents, where $\delta_{BH}$ is the degree of the black hole node.  In our construction, we consider an arbitrary initial configuration of the agents. Note that, despite this result, the possibility of solving the problem with constant many agents starting from a rooted initial configuration remains open.
We start with the following lemma on map construction of a clique of a certain size. Map construction requires the agents to compute a copy of the graph including all port labels \cite{Das_2019}.

\begin{lemma}\label{lm:Imp} 
If $Cl$ is a clique of size $\sqrt{n}$, and each node (denoted as $v$) of $Cl$ has storage of $O(\log n)$, then two agents with $O(\log n)$ memory cannot find the map of $Cl$.
\end{lemma}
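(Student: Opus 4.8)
The plan is a pure counting (information-theoretic) argument: there are far more port-labeled cliques on $\sqrt n$ vertices than there are distinguishable terminal configurations of a two-agent system that uses $O(\log n)$ memory per agent and has $O(\log n)$ whiteboard storage per node, so no algorithm can always output the correct map. The argument is oblivious to the running time and to the dynamics of the TVG; it uses only the memory and storage bounds.

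First I would lower-bound the number of pairwise non-isomorphic port-labeled cliques on $m:=\sqrt n$ vertices. Fixing the vertex set, a port labeling assigns, at each vertex, a bijection between its $m-1$ incident edges and $\{0,1,\dots,m-2\}$, so there are $((m-1)!)^{m}$ labeled instances; dividing by the at most $m!$ vertex relabelings gives at least $((m-1)!)^{m}/m!$ distinct maps. By Stirling, $\log\big(((m-1)!)^{m}/m!\big) = m\cdot\Theta(m\log m) - \Theta(m\log m) = \Theta(m^{2}\log m) = \Theta(n\log n)$, so distinguishing all possible maps of such a clique requires $\Omega(n\log n)$ bits.

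Next I would bound the number of global configurations the system can be in when an agent declares that it holds the map. Such a configuration is determined by the node occupied by each agent (at most $m^{2}\le n$ choices), the internal state of each agent ($2^{O(\log n)}=n^{O(1)}$ choices each), and the contents of the $m$ whiteboards (in total $(2^{O(\log n)})^{m}=2^{O(\sqrt n\log n)}$ choices); multiplying, there are $2^{O(\sqrt n\log n)}$ possible terminal configurations. Since the reported map must be a deterministic function of the terminal configuration, the algorithm can output at most $2^{O(\sqrt n\log n)}$ distinct maps. As $2^{O(\sqrt n\log n)}$ is asymptotically smaller than the $2^{\Omega(n\log n)}$ non-isomorphic port-labeled cliques, two non-isomorphic cliques $Cl_1\ne Cl_2$ are forced into the same terminal configuration and hence cause the same output; at least one of them then receives a wrong map, and that port labeling can be used in the black-hole construction. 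This contradicts the assumption that the two agents always find the map.

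The delicate point to nail down in the full write-up is the model-level meaning of ``finding the map'': I would fix it as the requirement that at termination the complete port-labeled copy of $Cl$ is recoverable from the joint contents of the agents' memories and the whiteboards, after which the configuration count above is routine. A minor subtlety is the isomorphism reduction---quotienting the $((m-1)!)^{m}$ labeled instances only by $m!$ is lossy but harmless, since $\log(m!) = o(m^{2}\log m)$ keeps the count at $2^{\Omega(n\log n)}$.
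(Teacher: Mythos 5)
Your proposal is correct and follows essentially the same route as the paper: an information-theoretic comparison between the $\Omega(n\log n)$ bits needed to represent the port-labeled map of the $\sqrt{n}$-clique and the $O(\sqrt{n}\log n)$ bits of total storage available across the two agents and the whiteboards. Your version is simply a more rigorous rendering of the paper's terse argument, since your count of $((m-1)!)^{m}/m!$ port-labeled cliques and the pigeonhole over terminal configurations justify the ``the map needs $n\log n$ storage'' claim that the paper asserts without proof.
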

\begin{proof}
    If the size of clique $Cl$ is $\sqrt{n}$, then the number of edges in $Cl$ is $n$. To find the map of $Cl$, we need $n \cdot \log n$ storage. In $Cl$, each node has $O(\log n)$ storage as a whiteboard, and each agent has $O(\log n)$ memory. Therefore, the total available storage in $Cl$ (including agents memory) is $O(\sqrt{n} \cdot \log n)$. This storage can not store $n \cdot \log n$ much information because $\sqrt{n} \cdot \log n=o(n \cdot \log n)$. Therefore, agents can not store the map of $Cl$. 
\end{proof}

The idea for the construction of $G$ for our impossibility result relies on Lemma \ref{lm:Imp}. If there is an edge (say $e$) between a node of clique $Cl$ and the black hole node, then to understand the fact that the adversary can not remove edge $e$, agents need to understand the map of $Cl$. Our impossibility result is as follows.

\begin{theorem}\label{thm:imp-1-BHS}
    It is impossible for $2\delta_{BH}$ agents to solve $1$-BHS problem with agents having $O(\log n)$ memory and nodes having $O(\log n)$ storage.    
    
\end{theorem}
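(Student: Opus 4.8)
The plan is to build an explicit footprint $G$ and an adversary strategy against which $2\delta_{BH}$ agents, placed in an arbitrary initial configuration, cannot solve $1$-BHS, leveraging Lemma \ref{lm:Imp}. First I would fix $\delta_{BH}$ and take the black hole $BH$ to have exactly $\delta_{BH}$ neighbours; crucially, I would attach each of these $\delta_{BH}$ neighbours to a ``gadget'' that is a clique $Cl$ of size $\sqrt{n}$ (more precisely, $\delta_{BH}$ disjoint such cliques, one per neighbour of $BH$, or a single shared clique — whichever makes the connectivity bookkeeping cleanest), so that every edge incident to $BH$ looks, from a local point of view, like ``the edge leaving a large clique''. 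The point of the clique is Lemma \ref{lm:Imp}: with only $O(\log n)$ memory per agent and $O(\log n)$ storage per node, the agents in any one clique can never reconstruct the map of that clique, hence can never certify that a particular edge is the unique cut edge separating them from the rest of $G$ and therefore can never be safely declared non-removable.

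The core argument is an indistinguishability/adversary argument. I would have the adversary maintain the invariant that at most one edge is removed per round while the TVG stays connected, and argue that the adversary can always ``hide'' the black hole by choosing, among the $\delta_{BH}$ edges incident to $BH$, which ones are currently present so that no agent is ever forced through one of them, yet no agent can ever deduce which node is $BH$. The agent count $2\delta_{BH}$ matters here: to traverse an edge $e$ incident to $BH$ safely (in the style of cautious/careful exploration used throughout the BHS literature), agents typically pair up — one probes while a partner waits — so $2\delta_{BH}$ agents is exactly the threshold below which not all $\delta_{BH}$ incident edges can be simultaneously ``guarded'', and I would show that whatever the $2\delta_{BH}$ agents do, either some agent is driven into $BH$ (in which case it is lost and cannot report), or the configuration is indistinguishable from one in which $BH$ is a different node, so termination is incorrect. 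The formal version: I would take two footprints $G_1, G_2$ (or two placements of $BH$ within the same footprint) that induce the same sequence of LCM-observations for every surviving agent up to the round of termination, because the only way to tell them apart would be to learn the map of the relevant clique, which Lemma \ref{lm:Imp} forbids; then any algorithm terminating correctly on $G_1$ terminates incorrectly on $G_2$.

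The steps in order would be: (1) describe $G$ — the black hole, its $\delta_{BH}$ neighbours, the $\sqrt{n}$-cliques hanging off those neighbours, and a connecting backbone ensuring $1$-interval connectivity is achievable by the adversary; (2) specify the arbitrary initial configuration, placing agents so that reaching $BH$ necessarily goes through a clique-exit edge; (3) state the adversary strategy: never remove an edge whose removal disconnects $\mathcal{G}$, and among edges incident to $BH$, keep exactly the one(s) the agents are not currently probing missing, forcing any agent that insists on crossing to cross into $BH$; (4) invoke Lemma \ref{lm:Imp} to argue the agents cannot build the clique map, hence cannot distinguish ``exit edge leads to $BH$'' from ``exit edge leads to a safe node'', so they cannot safely cross and cannot correctly terminate; (5) conclude by the two-worlds indistinguishability argument. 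The main obstacle I anticipate is step (3)–(4): making the adversary strategy simultaneously (a) respect $1$-interval connectivity at every round, (b) never be forced to expose $BH$, and (c) genuinely trap an agent whenever the algorithm tries to make progress — this requires a careful case analysis on how the (at most $2\delta_{BH}$) agents can be distributed among the $\delta_{BH}$ exit edges and showing a pigeonhole-type deficiency, together with a clean formalisation of ``the agents' joint state carries $o(n\log n)$ bits so it cannot encode which node is $BH$''. Everything else — the graph construction and the indistinguishability wrap-up — is routine once this core trade-off is pinned down.
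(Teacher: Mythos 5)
Your proposal follows essentially the same route as the paper: the same $\sqrt{n}$-size clique gadgets hanging off the black hole, the same appeal to Lemma \ref{lm:Imp} to show the agents cannot certify that a clique's exit edge is a non-removable cut edge, and the same conclusion that any agent crossing dies while any agent waiting cannot distinguish a dead companion from one blocked by a removed edge. The paper's concrete instantiation ($\delta_{BH}$ disjoint cliques of size $\lfloor(n-1)/\lfloor\sqrt{n}\rfloor\rfloor$, each joined to $v_{BH}$ by a single bridge $e_i$, with exactly two agents per clique) makes your anticipated obstacle in steps (3)--(4) largely vanish: since every edge incident to the black hole is a bridge, the adversary never needs to remove any edge or do any pigeonhole bookkeeping, and the argument reduces to the three-way case split per clique (both agents wait forever, one dies and one waits, or both die).
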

\begin{proof}
To show our impossibility result, we provide a graph construction where $2\delta_{BH}$ agents can not solve the $1$-BHS problem. 
Let $G$ be a graph of size $n$ and $v_{BH}$ be a black hole in $G$. Consider $m_1=\lfloor \sqrt{n} \rfloor$ and $m_2= \lfloor \frac{n-1}{m_1} \rfloor$. Let the degree of $v_{BH}$ be $m_1$. Let $Cl_1$, $Cl_2$, \ldots, $Cl_{m_1}$ be the cliques of size $m_2$. In $G$, clique $Cl_i$ is connected with node $v_{BH}$ via edge $e_i$ only. There is no edge between any two distinct cliques $Cl_i$ and $Cl_j$. We can see one such example in Fig. \ref{fig:Imp:1}. We consider two agents present in each $Cl_i$. Therefore, $2 \delta_{BH}$ many agents are present in $G$.

\begin{figure}
    \centering
    \includegraphics[width=0.4\linewidth]{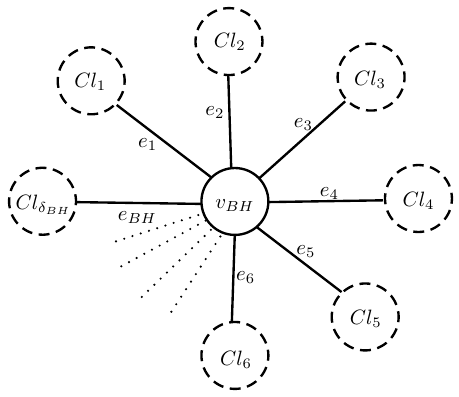}
    \caption{Construction of the class of graph for impossibility of $1-$BHS}
    \label{fig:Imp:1}
\end{figure}


In the above configuration, we see that each edge $e_i$ is a bridge connecting $Cl_i$ and black hole. The adversary cannot delete any of these edges as it disconnects the graph. But the agents can understand this only if the agents present in $Cl_i$ can find the map of the sub-graph $Cl_i$ as it sees $e_i$ as the only edge that connects  $Cl_i$ to the rest of the graph.  Due to Lemma \ref{lm:Imp}, agents in clique $Cl_i$ can not find the map of $Cl_i$, where $i \in [1, \delta_{BH}]$ as in this configuration, the size of the clique is $m_2$, where $m_2 = O(\sqrt{n})$.

In order to {find} the black hole, one of the agents must go through edge $e_i$ in the bid to reach the other part of the graph. If agent $a$ from $Cl_i$ goes through edge $e_i$, it gets destroyed at $v_{BH}$. In subsequent rounds, if there is any agent $b$ in $Cl_i$, it either waits for the agent $a$ indefinitely or also gets destroyed at node $v_{BH}$. So, the possibilities are, both the agents wait forever in clique $Cl_i$, or the agent $a$ dies at $v_{BH}$ and $b
$ waits forever in clique $Cl_i$ or both dies at $v_{BH}$. Therefore, in our graph construction, $2\delta_{BH}$ many agents are insufficient to solve the $1$-BHS problem.
\end{proof}


\section{$1$-BHS from Rooted Configuration using 9 Agents}\label{sec:1-dynamic}
In this section, we solve the 1-BHS problem with 9 agents starting from a rooted initial configuration where the adversary can remove at most one edge from $G$ in each round.
If the graph is static, our objective can be solved using 2 agents from a rooted configuration by doing a DFS traversal but in a little different manner. A round $t$ of the DFS needs to be executed in three sub-rounds, say $t_1$, $t_2$, $t_3$. Let two agents, say $a_1$ and $a_2$, be at node $v$ at the start of round $t$, i.e., sub-round $t_1$. In $t_1$, agent $a_1$ follows the round $t$ of the DFS algorithm, and the other agent stays at node $v$. In $t_2$ sub-round, $a_1$ returns to the node $v$. In the $t_3$ sub-round, both agents follow the round $r$ of the DFS algorithm. If, at some round, $a_1$ does not return to its position, the agent $a_2$ gets to know the black hole. This type of movement of agents is known as \emph{cautious walk} \cite{dyn_ring_jrnl, bhattacharya_2023, Adri_tori}. However, we cannot use this method to solve our problem as the adversary can remove an edge from $G$ at the beginning of a round. 

\vspace{0.2cm}
\noindent\textbf{Challenges:} The adversary can remove at most one edge from $G$ at the beginning of a round. When this occurs, agent(s) must wait at the node until the missing edge reappears, to continue the movement according to the DFS traversal. And if they continue to wait, the algorithm may get stuck forever. So the challenges include coming up with a modified exploration technique that works in our model, determining the number of agents required to run that algorithm, and determining the waiting time, if any, once agents understand a missing edge. Also, in the existing solution of cactus \cite{bhattacharya_2023} and ring \cite{dyn_ring_jrnl}, agents can learn which edge is missing during look phase, whereas in our model, the agents cannot learn missing edges during look. This poses a challenge in itself.

\vspace{0.2cm}
\noindent\textbf{Our approach:} We provide an exploration strategy using $3$ agents starting from rooted initial configuration (Section \ref{Exp:1}). As usual, exploration means that every node of $G$ needs to be visited by at least one agent. Then we present a strategy for cautious movement by a group of agents (Section \ref{Sec: Cautious_walk}). Finally, the agents find the black hole using the exploration strategy where edge movements are conducted by cautious walk (Section \ref{sec: Algorithm}).

\subsection{Exploration strategy with correctness}\label{Exp:1}
A high-level strategy for exploration, when the adversary can remove an edge from $G$ at the start of each round, is as follows.

\vspace{0.2cm}
\noindent\textbf{High-level idea:} We consider three agents, say $a_1$, $a_2$, and $a_3$. Without loss of generality, let $a_1.ID<a_2.ID<a_3.ID$, where $a_i.ID$ represents the ID of an agent $a_i$ for each $i \in \{1,2,3\}$. All agents execute DFS from the same node, say $v$. The basic idea of exploration is that when an adversary removes an edge $e=(u,\,v)$, the smallest ID agent $a_1$ stays at the same node and waits for the appearance of the edge. The remaining agents (if any) restart the execution of the depth-first search or skip the edge to continue the ongoing depth-first search algorithm. All the agents execute their depth-first search algorithm in this manner. Let an edge $e'=(u',\, v')$ be removed at round $t$. At round $t$, one agent is at node $u'$ while another is at $v'$. When the third agent reaches $u'$ or $v'$, the minimum ID agent waits for $e'$ to reappear while the other proceeds with its DFS. Hence, at a time, one missing edge can affect no more than $2$ agents; thus, the $3^{rd}$ agent continues performing its DFS movement in that round. In other words, at least one agent performs its DFS movement at every round. Now we proceed with a detailed description of the algorithm for exploration.

\vspace{0.2cm}
\noindent\textbf{Description of algorithm:} Three agents, namely, $a_1$, $a_2$, $a_3$, stars from a node, say $v_1$. Each agent $a_i$ maintains the following parameters:
\begin{itemize}
    \item $a_i.ID$: It denotes the ID of agent $a_i$.
    \item $a_i.success$: It denotes whether the movement of agent $a_i$ is successful. In other words, when an agent $a_i$ attempts to move from a node $u$ to $v$ via the edge $e$ in round $t$, it is either successful or unsuccessful. If the agent reaches the node $v$ in the round $t+1$, it is a successful attempt, thus, $a_i.success=True$; otherwise, $a_i.success=False$. It is initially set to $True$. 
    \item $a_i.label$: It denotes how many times agent $a_i$ has started the DFS since the start of the algorithm. It is initially set to $-1$. 
    \item $a_i.state$: It denotes the state of its state in the current DFS. Agent $a_i$ can have either an $explore$ or a $backtrack$ state. It is initially set to $explore$. 
    \item $a_i.pin$: It stores the port through which an agent $a_i$ enters into the current node. It is initially set to $-1$.  
    \item $a_i.pout$: It stores the port through which an agent $a_i$ exits from the current node. It is initially set to $-1$.
    \item $a_i.r$: It stores values either $0$ or $1$. If the current round $t$ is such that $t\mod2=0$, then $a_i.r=0$. Else, it is set to $1$.
\end{itemize}

The nodes of the graph are equipped with a whiteboard. We represent the contents of the whiteboard as follows. The whiteboard at a node $v$ is represented as $wb_v$ and information corresponding to each agent $a_i$ is stored on it. Precisely, $wb_v(a_i.parent, a_i.label)$ represents that the agent $a_i$ first visited the node $v$ via the port $a_i.parent$ and this information corresponds to the $(a_i.label)^{th}$ DFS being run by the agent $a_i$.  

Initially, $a_i.state=explore$, and $a_i.success=True$ for each $i$. In the round $0$, each agent $a_i$ updates their $parent$ as $-1$ and $label$ as $1$ on the whiteboard of the node $v_1$. In other words, $wb_{v_1}(a_i.parent, a_i.label)=(-1,1)$. They set $a_i.pout=0$ and move through the port $0$. If the edge corresponding to port $0$ is present, then the agents are successful in reaching a new node, else they remain at the same node. The decision of an agent $a_i$ being at a node $w$ at round $t$ is described below in detail based on different cases. The pseudo-code of the algorithm Movement($a_i$, $t$) is given in Algorithm \ref{Algo:1-dynamic}.
\begin{algorithm}
     \caption{Movement($a_i$, $t$)}\label{Algo:1-dynamic}
     \If{$a_i.r=1$}
    {
    It sets $a_i.r=0$.\\
    \If{attempt to move by the agent $a_i$ in round $(t-1)$ is non-successful}
    {
        $a_i.success=False$\\
    }
    \ElseIf{attempt to move by the agent $a_i$ in round $(t-1)$ is successful}
    {
        $a_i.success=True$\\
        call DFS($a_i, a_i.label$)
    }
    }   
    \If{$a_i.r=0$}
    {
    It sets $a_i.r=1$.\\    
        \If{$a_i.success=False$ }
            {
                \If{there is no other agent except $a_i$}
                    {
                    move through the port $a_i.pout$
                    }
                \ElseIf{there is another agent $a_j$ at the current node $v$}
                    {
                        \If{$a_j.success=True$}
                        {
                        move through the port $a_i.pout$
                        }
                        \ElseIf{$a_j.success=False$}
                        {
                        call Test($a_i,a_j$)
                        }
                    }
            }
        \ElseIf{$a_i.success=True$}
            {
            move through $a_i.pout$
            }
    }
 \end{algorithm}

\begin{itemize}
    \item[1.] \textbf{Case} $\mathbf{(a_i.r=1)}$: It sets $a_i.r=0$. If in the previous round $t-1$, agent $a_i$ could not move through its computed port, then it sets $a_i.success=False$, otherwise, $a_i.success=True$. Further, if $a_i.success=True$, this means $a_i$ has moved to a new node. Based on the DFS scheme, it computes the next port it has to move through and stores it in the variable $a_i.pout$. Precisely, the agent $a_i$ calls the algorithm DFS($a_i, a_i.label$). We provide the details of DFS($a_i,a_i.label$) later (lines 1-7 of Algorithm \ref{Algo:1-dynamic}).  

    \item[2.] \textbf{Case} $\mathbf{(a_i.r=0)}$: It sets $a_i.r=0$. We have the following two cases:
    \begin{itemize}
        \item[(A)] If $a_i.success=False$:  This means it has not moved to a new node. It now checks if it is alone at the current node or if any other agent, say $a_j$ is present with it. In the former case, it does not change its parameters and moves through the port $a_i.pout$ (lines 11-12 of Algorithm \ref{Algo:1-dynamic}). However, if there is another agent $a_j$ at the current node, then it decides whether to move through the computed port in the previous round, skip the current port, or restart a new DFS, based on the two cases:
        \begin{itemize}
            \item[(a)] If $a_j.success=True$: The agent $a_i$ moves through the port $a_i.pout$ (lines 14-15 of Algorithm \ref{Algo:1-dynamic}).
            \item[(b)] If $a_j.success=False$: The agent $a_i$ understands that $a_j$ also could not move in the previous round. Thus, it calls the function Test($a_i,a_j$). This function helps the agent $a_i$ to decide whether to start a new DFS traversal, skip the current port or continue to move through $a_i.pout$ based on the IDs of $a_i$ and $a_j$. The details of this function are provided later in this section (lines 16-17 of Algorithm \ref{Algo:1-dynamic}). 
        \end{itemize}
        
        \item[(B)] If $a_i.success=True$: This means the agent $a_i$ has entered into a new node, say $v$. It moves through its computed value of $a_i.pout$ (lines 18-19 of Algorithm \ref{Algo:1-dynamic}). 
    \end{itemize}      
\end{itemize}

Now we provide the details of Test($a_i,a_j$) along with the pseudo-code \ref{Algo: Test}. If more than one agent is present at a node, then the agents make decisions as follows. 
\begin{itemize}
    \item[1.] If $a_i.pout\neq a_j.pout$: The agent $a_i$ moves through $a_i.pout$ (lines 1-2 of Algorithm \ref{Algo: Test})
    \item[2.] If $a_i.pout=a_j.pout$: Then the state of both agents is checked. The possible cases are both $a_i$ and $a_j$ have state $explore$, one of them wants to $backtrack$, or both want to $backtrack$ through the same port. The decision for each case is as follows:
        \begin{itemize}
            \item[(A)] If $a_i.state=explore$ and $a_j.state=explore/backtrack$: If $a_i.ID>a_j.ID$,
            then the agent $a_i$ skips the current port i.e. $a_i.pout$, and sets $a_i.pout=(a_i.pout+1)\mod\delta$. The agent $a_i$ checks the constituents of whiteboard $wb_v(a_i.parent,\\ a_i.label)$. We have two cases. Either the current node is the root node or not. If the current node is the root node, i.e. $a_i.parent=-1$ and $a_i.pout=0$, then it increments its value of $label$ by $1$ and updates $(-1, a_i.label)$ on the whiteboard of the current node. This means $a_i$ begins a new DFS traversal of the graph with the current node as its new root node. It moves through $a_i.pout$. However, if $a_i.pout\neq 0$, then it simply moves through $a_i.pout$. On the other hand, if the current node is not the root node, then $a_i$ checks if $a_i.pout$ is the same as $a_i.parent$. If $a_i.pout=a_i.parent$, then $a_i$ updates $a_i.state=backtrack$. Otherwise, it does not update its state. It further moves through $a_i.pout$ (lines 5-19 of Algorithm \ref{Algo: Test}).
            
            If $a_i.ID<a_j.ID$, it moves through same port $a_i.pout$ (lines 18-19 of Algorithm \ref{Algo: Test}).
            \item[(B)] If $a_i.state=backtrack$ and $a_j.state=explore/backtrack$: If $a_i.ID>a_j.ID$, then $a_i$ increments its $label$ to $a_i.label+1$, sets $a_i.state=explore$ and begins a new DFS traversal of the graph with the current node as the root node and $a_i.pout$ same as the minimum port except for the current value of $a_i.pout$. The agent $a_i$ also writes ($-1,a_i.label$) on the whiteboard of the current node (lines 21-24 of Algorithm \ref{Algo: Test}). 
            However, if $a_i.ID<a_j.ID$, then it moves through its current $a_i.pout$ (lines 25-26 of Algorithm \ref{Algo: Test}).
        \end{itemize}
\end{itemize}

\begin{algorithm}
     \caption{Test($a_i$, $a_j$)} \label{Algo: Test}
     \If{$a_i.pout\neq a_j.pout$}
        {
        move through $a_i.pout$
        }
    \ElseIf{$a_i.pout=a_j.pout$}
        {
        \If{$a_i.state=explore$}
            {
            \If{$a_i.ID>a_j.ID$}
            {
                check the constituents of $wb_v(a_i.parent, a_i.label)$ \\
                set $a_i.pout=(a_i.pout+1)\mod\delta_v$\\
                \If{$a_i.parent=-1$}
                {
                    \If{$a_i.pout=0$}
                    {
                        set $a_i.label=a_i.label+1$ and $a_i.state=explore$\\
                        write $(-1,a_i.label)$ on $wb_v$ and move through $a_i.pout$
                    }
                    \Else
                    {
                        move through $a_i.pout$
                    }
                }
                \Else
                {
                    \If{$a_i.pout=a_i.parent$}
                    {
                    update $a_i.state=backtrack$\\
                    }
                move through $a_i.pout$
                }
            }
            \ElseIf{$a_i.ID<a_j.ID$}
            {
                move through $a_i.pout$
            }
            }
       
        \ElseIf{$a_i.state=backtrack$}
            {
                \If{$a_i.ID>a_j.ID$}
                {
                    $a_i$ sets $a_i.label=a_i.label+1$, $a_i.state=explore$\\
                    write $(-1, a_i.label)$ on $wb_v$\\
                    set $a_i.pout$ same as the minimum available port except for the current value of $a_i.pout$ and move through $a_i.pout$
                }
                \ElseIf{$a_i.ID<a_j.ID$}
                {
                    move through $a_i.pout$
                }
            }
        }       
 \end{algorithm}

Each agent $a_i$ runs its own DFS with label $a_i.label$ as follows. 
\begin{itemize}
    \item[1.] \textbf{Case} $\bm{(a_i.state=explore)}$: Let the agent $a_i$ be currently at node $v$. It checks the constituents of $wb_v$ corresponding to $a_i$. Let the constituents of the whiteboard be of the form $wb_v(a_i.x, a_i.y)$. If $a_i.y\neq a_i.label$ or $a_i.y=\bot$, this means the node $v$ is visited by $a_i$ for the first time with label $a_i.label$. Thus, it writes ($a_i.pin, a_i.label$) corresponding to $wb_v(a_i.x,a_i.y)$. The agent $a_i$ then updates $a_i.pout=(a_i.pin+1)\mod\delta_v$. If this updated value of $a_i.pout$ is not the same as $a_i.x$ on the whiteboard, then $a_i$ does not change its $state$. Else, if $a_i.pout=a_i.x$, then $a_i$ sets $a_i.state=backtrack$ (lines 2-7) of Algorithm \ref{Algo: DFS}.

    Otherwise, if $a_i.y=a_i.label$ on the $wb_v$, this means the node $v$ was visited beforehand by the agent $a_i$ with label $a_i.label$. Thus, it sets $a_i.state$ to $backtrack$. It also updates $a_i.pout=a_i.pin$ (lines 8-9 of Algorithm \ref{Algo: DFS}).

    \item[2.] \textbf{Case} $\bm{(a_i.state=backtrack)}$: The agent $a_i$ updates its value of $a_i.pout$ as $(a_i.pin+1)\mod\delta_v$ and checks $wb_v(a_i.x,a_i.y)$. If $a_i.x= -1$, i.e. the current node is not the root node, then it further checks if the updated value of $a_i.pout=0$ or not. If $a_i.pout=0$, then it means the graph has been explored by $a_i$ with its current label $a_i.label$. Thus, it starts a new DFS traversal of the graph by updating $a_i.state=explore$ and $a_i.label=a_i.label+1$. Further, it writes on the whiteboard $wb_v(a_i.x, a_i.y)=(-1, a_i.label)$. However, if $a_i.pout\neq 0$, then it updates $a_i.state=explore$ (lines 11-18 of Algorithm \ref{Algo: DFS}). 
    
    If the current node is not the root node, i.e. $a_i.x\neq -1$, then it checks if $a_i.pout$ is same as $a_i.x$ extracted from the whiteboard. If it is the same, then it keeps its state as $backtrack$. Otherwise, it updates its state to $explore$ (lines 19-23 of Algorithm \ref{Algo: DFS}). 
\end{itemize}
The pseudo-code is given as  DFS($a_i,a_i.label$) below. Next, we present a movement strategy involving three agents, known as the \textit{cautious walk}, which helps in determining the location of the black hole.
\begin{algorithm}
    \caption{DFS($a_i, a_i.label$)} \label{Algo: DFS}
    \If{$a_i.state=explore$}
    {
        check the constituents of $wb_v(a_i.x, a_i.y)$\\
        \If{$a_i.y\neq a_i.label$ or $a_i.y=\bot$}
        {
            write on the whiteboard $wb_v(a_i.x, a_i.y)=(a_i.pin, a_i.label)$\\
            update $a_i.pout=(a_i.pin+1)\mod\delta_v$\\
                \If{$a_i.pout=a_i.pin$ }
                {
                set $a_i.state=backtrack$\\
                }
        }
        \ElseIf{$a_i.y = a_i.label$}
        {
        set $a_i.state=backtrack$ and $a_i.pout=a_i.pin$\\
        }
    }
    \ElseIf{$a_i.state=backtrack$}
    {        
        set $a_i.pout=(a_i.pin+1)\mod\delta_v$\\
        check the constituents of $wb_v(a_i.x, a_i.y)$\\
            \If{$a_i.x=-1$}
            {
                \If{$a_i.pout=0$}
                {
                    set $a_i.state=explore$ and update $a_i.label=a_i.label+1$\\
                    write on the whiteboard $wb_v(a_i.x, a_i.y)=(-1,a_i.label)$
                }
                \ElseIf{$a_i.pout\neq 0$ }
                {
                    set $a_i.state=explore$
                }
            }
            \ElseIf{$a_i.x\neq -1$}
            {
                \If{$a_i.pout\neq a_i.x$ available in $wb_v(a_i.x, a_i.y)$}
                    {
                        set $a_i.state=explore$\\
                    }
                \ElseIf{$a_i.pout=a_i.x$}
                    {
                        set $a_i.state=backtrack$\\
                    } 
            }   
    }
 \end{algorithm}

 Let $a_1, a_2, a_3$ be three agents initially co-located at a safe node $v_r$ in the graph $G$, with unique identifiers such that $a_1.ID < a_2.ID < a_3.ID$. Let $\mathcal{P}$ denote the DFS traversal path of the static graph $G$ rooted at node $v_r$, i.e., we assume for now that all edges of $G$ are present at all times in the time-varying graph $\mathcal{G}$. We aim to establish the correctness of an exploration strategy for the three agents in the dynamic setting where, in each round, at most one edge of $G$ can be missing.

We start with the following observations of our exploration algorithm with three agents.

\begin{observation}\label{obs:path_of_exp_a1}
Let \( e = (u_1, u_2) \) be an edge that is missing from the time-varying graph \( \mathcal{G} \) at round \( r \). Let \( q_1 \) and \( q_2 \) denote the port numbers associated with edge \( e \) at nodes \( u_1 \) and \( u_2 \), respectively. Specifically, the outgoing port at \( u_1 \) that leads to \( u_2 \) is \( q_1 \), and the outgoing port at \( u_2 \) that leads to \( u_1 \) is \( q_2 \).

Suppose there are multiple agents located at node \( u_1 \) and their next move in the DFS traversal corresponds to port \( q_1 \). If any agent \( a_i \) at \( u_1 \) that is not the minimum ID agent (the smallest ID) among the agents at this node becomes aware of the missing edge \( e \), it will either initiate a new DFS traversal from the root or continue its current DFS traversal by skipping port \( q_1 \). As a result, the smallest ID agent, specifically agent \( a_1 \), does not deviate from path $\mathcal{P}$.
\end{observation}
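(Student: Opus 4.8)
The plan is to argue directly from the pseudocode that $a_1$ never changes the port prescribed by its own DFS except through the ordinary depth-first rule itself, so that its trajectory through $G$ is precisely $\mathcal{P}$ up to inserted stationary rounds. The starting point is the two-round structure of a DFS step: since $a_i.r$ toggles with the parity of the round, in the round with $a_i.r=1$ the agent learns whether its previous attempt succeeded and, if so, calls DFS$(a_i,a_i.label)$ to recompute $a_i.pout$; in the round with $a_i.r=0$ it attempts to cross $a_i.pout$ (Algorithm~\ref{Algo:1-dynamic}). The value DFS writes into $a_i.pout$ is the one fixed by the deterministic rule of Algorithm~\ref{Algo: DFS}, which reads only $a_i$'s own variables and the whiteboard slot $wb_v(a_i.parent,a_i.label)$, a slot written and read by $a_i$ alone; hence, barring interference, $a_i$ reproduces $\mathcal{P}$ exactly.

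Next I would enumerate the only two ways $a_i$ can diverge from this. First, an \emph{unsuccessful} attempt: if the edge of port $a_i.pout$ is absent in an $a_i.r=0$ round, then in the ensuing $a_i.r=1$ round $a_i.success$ is set to $False$, DFS is not invoked, and $a_i.pout$ is left untouched; in the following $a_i.r=0$ round, lines~11--19 of Algorithm~\ref{Algo:1-dynamic} apply, and unless a co-located agent $a_j$ also has $a_j.success=False$, the agent simply re-attempts the same port. Second, a call to Test$(a_i,a_j)$: inspecting Algorithm~\ref{Algo: Test}, the variable $a_i.pout$ is altered only along the branches guarded by $a_i.ID>a_j.ID$ (the $explore$ branch, where $a_i.pout$ is advanced by one modulo $\delta_v$, possibly spawning a fresh DFS when it wraps to port $0$ at a root; and the $backtrack$ branch, where $a_i$ increments $a_i.label$, resets $a_i.state$ to $explore$, writes $(-1,a_i.label)$ and restarts the DFS with the current node as root). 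In the branch $a_i.pout\neq a_j.pout$, and in both branches with $a_i.ID<a_j.ID$, the agent re-attempts $a_i.pout$ unchanged.

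Combining the two: whenever $a_1$ fails to move it either re-attempts its current port directly, or it enters Test against some co-located $a_j$, where necessarily $a_1.ID<a_j.ID$ because $a_1$ has the globally smallest identifier, and so it again re-attempts the same port. Therefore $a_1$ never skips a port and never restarts a DFS on account of a missing edge; its DFS state and $a_1.pout$ evolve exactly as in the static run, and the sequence of nodes it visits is precisely $\mathcal{P}$, with extra stationary rounds inserted whenever the next DFS edge is temporarily missing. Symmetrically, under the hypothesis that several agents sit at $u_1$ all intending to cross the missing port $q_1$, any one of them that is not the minimum-ID agent present satisfies $a_i.ID>a_j.ID$ in its Test call and hence either skips $q_1$ or restarts, which is the stated behaviour. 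I expect the only delicate point to be the bookkeeping behind "re-attempting the same port means following $\mathcal{P}$": one must check that on an unsuccessful round DFS is genuinely bypassed, so that $a_1.state$, $a_1.label$ and $a_1.pout$ are frozen, and that the whiteboard slots are per-agent so no other agent can perturb $a_1$'s DFS; once these are in place the rest is a routine walk through the cases of Algorithms~\ref{Algo:1-dynamic}--\ref{Algo: DFS}.
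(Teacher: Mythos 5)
Your argument is correct and matches the reasoning the paper relies on: the paper states this as an observation justified directly by the design of Test (ID-based tie-breaking) and the per-agent whiteboard slots, which is exactly the case analysis you carry out. Your write-up simply makes explicit what the paper leaves implicit — that on an unsuccessful round DFS is bypassed so $a_1$'s state is frozen, and that in every Test call the globally minimum agent falls into the $a_i.ID<a_j.ID$ branches and thus re-attempts the same port — so no gap remains.
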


\begin{theorem}\label{thm:1dynamic}
    One of the three agents $a_1$, $a_2$, or $a_3$ explores $G$ in $O(|E|^2)$ rounds, where each agent uses only $O(\log n)$ bits of memory.
\end{theorem}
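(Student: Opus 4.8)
The plan is to argue that at least one of the three agents is never permanently delayed by a missing edge, and that this agent effectively completes a DFS traversal of the static footprint $G$, which takes $O(|E|)$ rounds by Theorem~\ref{Thm:DFS}; the quadratic blow-up then comes from the restarts forced on the other two agents. First I would make precise the two-sub-round structure (the parity variable $a_i.r$): each ``DFS round'' is implemented in two actual rounds, one in which the outcome of the previous attempt is read and the next port computed via \texttt{DFS}, and one in which the move (or skip/restart via \texttt{Test}) is executed. So it suffices to bound the number of DFS rounds and multiply by a constant.

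The key structural step is the ``at most two agents are blocked'' argument. At any round the adversary removes a single edge $e=(u_1,u_2)$. An agent is stalled by $e$ only if it is sitting at $u_1$ or $u_2$ and its computed outgoing port is the one corresponding to $e$; so at most two of the three agents can be stalled in that round, and hence at least one agent performs a genuine DFS move every round. Next I would invoke Observation~\ref{obs:path_of_exp_a1}: among agents co-located at an endpoint of the missing edge with that edge as their next move, every agent except the minimum-ID one either restarts its DFS or skips the port, while the minimum-ID agent waits. Applying this with the global minimum, $a_1$: whenever $a_1$ is blocked, it simply waits at the same node for the edge to reappear, and it never skips an edge or restarts spuriously. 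Therefore the sequence of nodes $a_1$ visits is exactly the DFS path $\mathcal{P}$ of the static graph $G$ rooted at $v_r$ — the adversary can only insert ``wait'' rounds into $a_1$'s schedule, never reroute it.

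The main obstacle, and the step I would spend the most care on, is bounding the number of wait rounds inserted into $a_1$'s traversal. A priori the adversary could keep the single edge $a_1$ wants to cross missing for a long time. The point is that the edge $e=(u_1,u_2)$ blocking $a_1$ is, by connectivity, the only edge the adversary may delete that round, so while $a_1$ waits, $a_2$ and $a_3$ keep moving; and $e$ cannot be a bridge whose removal disconnects $G$ is irrelevant — rather, the adversary is free to keep $e$ down, but then it cannot block anyone else. I would argue that each of the other two agents can be forced to restart its DFS only $O(|E|)$ times (a restart is triggered by meeting another stalled agent at a shared port, and between two consecutive restarts an agent makes progress along a fresh DFS which is $O(|E|)$ rounds), giving $O(|E|^2)$ rounds total for them; and crucially that $a_1$'s total wait time is also $O(|E|^2)$ because the adversary has only $O(|E|)$ distinct edges to stall on and, by a potential/charging argument against the progress made by $a_2,a_3$, cannot stall $a_1$ on any single configuration indefinitely without letting the other agents finish. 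I would formalize this by: if $a_1$ were blocked forever at some node waiting on edge $e$, then from some round on the adversary must keep deleting $e$ every round, so no other edge is ever missing thereafter, so $a_2$ (say) runs an undisturbed DFS from that point and completes exploration of $G$ in $O(|E|)$ further rounds — contradiction with the assumption that the process runs forever without exploration. Chaining these bounds: within $O(|E|^2)$ rounds either $a_1$ has traversed all of $\mathcal{P}$ (hence visited every node), or one of $a_2,a_3$ has completed an undisturbed DFS; either way $G$ is explored, and $O(\log n)$ memory suffices since each agent stores only a constant number of port numbers, an ID, a label, and two flags, each $O(\log n)$ bits.
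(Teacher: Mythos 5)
Your overall framing matches the paper's: the minimum-ID agent $a_1$ follows the static DFS path $\mathcal{P}$ and only accumulates wait rounds, and the task is to show that if $a_1$ is stalled too long on a missing edge, some other agent finishes the exploration. However, your key step has a genuine gap. You formalize it as: ``if $a_1$ were blocked forever waiting on $e=(u,v)$, then from some round on only $e$ is ever missing, so $a_2$ runs an \emph{undisturbed} DFS and completes exploration in $O(|E|)$ further rounds.'' This is false under the algorithm: $a_2$'s own DFS also attempts to traverse $e$ (the DFS explores every port), and the skip/restart rule in \texttt{Test} is invoked only when \emph{two blocked agents meet at the same node with the same outgoing port}. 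A blocked agent that is alone simply keeps retrying the same port, i.e., it waits. So $a_2$ can arrive alone at the opposite endpoint $v$, attempt $(v,u)$, and be stuck there forever — its DFS is disturbed by exactly the edge that blocks $a_1$. This is precisely why the paper needs three agents and a two-stage case analysis: it first shows that within $16|E|$ rounds $a_2$ either explores $G\setminus\{(u,v)\}$ or gets stuck at $v$, and only then argues that $a_3$ (having the largest ID, hence always the one that skips or restarts when it meets a stuck agent at $u$ or at $v$) explores $G\setminus\{(u,v)\}$ within a further $16|E|$ rounds. Your proposal never handles the ``$a_2$ stuck at the other endpoint'' scenario, so the contradiction you derive does not go through, and the role of the third agent — the crux of the theorem — is left unused in the decisive step.

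A secondary, smaller issue: your charging claim that each of $a_2,a_3$ restarts its DFS only $O(|E|)$ times and your appeal to ``at least one agent makes a genuine DFS move every round'' are not developed into a round bound; the paper instead gets $O(|E|^2)$ directly by bounding each of $a_1$'s at most $4|E|$ moves by a $32|E|$-round delay (beyond which $a_2$ or $a_3$ provably finishes), which is the cleaner route once the three-agent case analysis above is in place. Your memory accounting is fine and matches the paper's.
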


\begin{proof}
Suppose agent $a_1$ begins executing a DFS traversal from node $v_r$. As per our algorithm, agents move in odd-numbered rounds and use even-numbered rounds to detect whether their previous movement attempt was successful. If no edge failure occurs during the first $8|E|$ rounds, i.e., all edges required by $a_1$ for its DFS traversal remain present when needed then agent $a_1$ completes its exploration of $\mathcal{G}$ by following the DFS path specified in Observation \ref{obs:path_of_exp_a1} and Theorem \ref{Thm:DFS}.\footnote{Each DFS traversal takes at most $4|E|$ moves, which correspond to $8|E|$ rounds in total.}

Now, suppose that in some odd round $r$, agent $a_1$ is at node $u$ and attempts to traverse edge $e = (u, v)$, but the edge is missing. Let $t > r$ be the next odd round in which $a_1$ retries traversing edge $e$. We show that if edge $e$ does not reappear within the next $32|E|$ rounds from round $r$, then either agent $a_2$ or $a_3$ successfully explores the entire graph.

Agent $a_2$ also performs a DFS traversal, independently initiated from its own root. Within the next $8|E|$ rounds after round $t$, agent $a_2$ must be in one of the following scenarios:

\begin{itemize}
\item \textbf{Case 1:} It is at node $v$ and attempts to traverse edge $(v, u)$, but fails, and thus becomes stuck.
\item \textbf{Case 2:} It is at node $u$ and attempts to move via edge $(u, v)$ in state $a_i.state =backtrack$.
\item \textbf{Case 3:} It is at node $u$ and attempts to move via edge $(u, v)$ in state $a_i.state =explore$.
\end{itemize}

In \textbf{Case 2}, agent $a_2$ treats node $u$ as the root and starts a new DFS traversal. Within the next $8|E|$ rounds, it either explores $G \setminus \{(u,v)\}$ or becomes stuck at node $v$ when it tries to traverse edge $(v,u)$. In \textbf{Case 3}, the agent skips edge $(u,v)$ and continues its DFS. There are two possible scenarios: either it reaches node \(v\) and remains stuck there, or it reaches the root node and initiates a new DFS. In either case, within the next \(8|E|\) rounds starting from the root, it either explores the graph \(G \setminus \{(u,v)\}\) or gets stuck at node \(v\) while attempting to traverse the edge \((v,u)\).

Hence, by round $t + 16|E|$, either agent $a_2$ has explored $G \setminus \{(u,v)\}$, or it has become stuck at node $v$. If exploration is completed by agent $a_2$, the objective is fulfilled. Otherwise, let $t' \in [t, t + 16|E|]$ be the round in which agent $a_2$ gets stuck at node $v$, and let $a_3$ be at some node $w$ at round $t'$.
Within the next $8|E|$ rounds (i.e., by round $t' + 8|E|$), agent $a_3$ falls into one of the following cases:

\begin{itemize}
\item \textbf{Case A:} It is at node $u$ (resp. $v$) and tries to move via edge $(u,v)$ (resp. $(v,u)$) in state $a_i.state = backtrack$.
\item \textbf{Case B:} It is at node $u$ (resp. $v$) and tries to move via edge $(u,v)$ (resp. $(v,u)$) in state $a_i.state = explore$.
\end{itemize}

In \textbf{Case A}, agent $a_3$ considers node $u$ (resp. $v$) as its root and initiates a new DFS. Within the next $8|E|$ rounds, it explores $G \setminus \{(u,v)\}$ because it skips the missing edge if it reaches the other endpoint. In \textbf{Case B}, the agent skips the missing edge and either:
\begin{itemize}
    \item reaches node $v$ (resp. $u$) with $a_3.state=backtrack$, reducing this to Case A; or
    \item reaches node $v$ (resp. $u$) with $a_3.state=backtrack$, again skipping the edge and continuing toward the root; or
    \item reaches the root node and initiates a new DFS, from where it again explores $G \setminus \{(u,v)\}$ within $8|E|$ rounds.
\end{itemize}

Thus, by round $t' + 16|E|$, agent $a_3$ completes exploration of $G \setminus \{(u,v)\}$. Consequently, by round $t + 32|E|$, if edge $(u,v)$ has not reappeared, then either $a_2$ or $a_3$ has successfully explored $\mathcal{G} \setminus \{e\}$, fulfilling our objective.

If the edge reappears before round $t + 32|E|$, then agent $a_1$ continues its DFS traversal along the precomputed path (Observation \ref{obs:path_of_exp_a1}), completing its exploration. Therefore, in at most $32|E| \times 8|E| = 256|E|^2$ rounds, at least one agent completes the exploration of $\mathcal{G}$. 

Each agent maintains the following variables: $a_i.\mathit{ID}$, $a_i.\mathit{success}$, $a_i.\mathit{label}$, $a_i.\mathit{state}$, $a_i.\mathit{pin}$, $a_i.\mathit{pout}$, and $a_i.r$. In each round $r$, the agent uses $O(1)$ memory to store $a_i.\mathit{success}$, $a_i.\mathit{state}$, and $a_i.r$. To store the values of $a_i.\mathit{ID}$, $a_i.\mathit{pin}$, and $a_i.\mathit{pout}$, the agent requires $O(\log n)$ memory per variable, since each represents a node identifier or a port number in an $n$ node graph. The variable $a_i.\mathit{label}$ may increase up to $O(|E|^2)$ over the course of the algorithm. However, since agent $a_i$ only need to store the current value of $a_i.\mathit{label}$, and the maximum label value is bounded by $O(|E|^2)$, it can be stored using $O(\log |E|)$ bits, which is $O(\log n)$ as $|E| \leq n^2$.
Therefore, each agent requires $O(\log n)$ memory throughout the execution of the algorithm. This completes the proof.
\end{proof}

Based on Theorem \ref{thm:1dynamic}, we have the following remark.

\begin{remark}
A permanently missing edge can indefinitely block at most two agents from completing their DFS exploration. In particular, one agent may be stuck at one endpoint of the missing edge while attempting to traverse it, and the second agent may be stuck at the opposite endpoint. All other agents, by design of the algorithm, either skip the missing edge or restart DFS, eventually enabling one of them to explore the entire graph.
\end{remark}

\subsection{Cautious walk} \label{Sec: Cautious_walk}
Cautious walk is a movement strategy that ensures that one agent stays alive if a group of agents perform this movement while exploring an edge $(u,v)$ from a node $u$ without knowing that $v$ is the black hole. 
We need 3 agents to perform a cautious walk in our model such that no more than 2 agents can be destroyed. Below, we discuss in detail.

Let $a_1$, $a_1'$, and $a_1''$ be three agents at a safe node $u$. If these agents want to traverse through the edge $(u,v)$ to reach a node $v$, it is required first to verify if the node $v$ is safe or not. One of the three agents is the leader, while the other two are known as helpers. The helpers check whether the node $v$ is safe or not so that the leader can safely visit it. Let $a_1$ be the leader and $a_1'$, $a_1''$ be the first and second helper, respectively. The first helper $a_1'$ attempts to move through edge $(u,v)$. The movement by $a_1'$ is either successful (due to the presence of the edge) or unsuccessful (due to the absence of the edge). Until the movement by $a_1'$ is successful, it re-attempts to move through edge $(u,v)$. Let at round $t$, $a_1'$ successfully move through edge $(u,v)$. Till this happens,  $a_1$, $a_1''$ waits at $u$. At round $t+1$, $a_1'$ tries to return through edge $(v,u)$ in case $v$ is a safe node so that the leader can understand that $v$ is safe. If the edge is deleted in round $t+1$, then the leader $a_1$ can not understand if the non-return of $a_1'$ is due to the fact that the edge is removed or due to the fact that $v$ is a black hole. To avoid this confusion, from $t+1$ round onwards $a_1'$ tries to return through edge $(v,u)$ in case $v$ is a safe node and $a_1''$ tries to move through edge $(u,v)$. Let $c\geq 1$ be the smallest integer such that the edge  $(u,v)$ is present in round $t+c$. Then $a_1''$ surely does a successful move in round $t+c$ through edge $(u,v)$. If $v$ is a safe node, then  $a_1'$ must have a successful return at round $t+c$ through edge $(v,u)$. In this case, the leader $a_1$ understands that the node $v$ is not a black hole. If $v$ is not a safe node, then  $a_1''$ leaves but $a_1'$ does not return, and the leader understands $v$ is a black hole; hence $a_1'$ and $a_1''$ both will be destroyed at $v$.

In case $a_1$ understands $v$ is a safe node, $a_1$ and $a_1'$ attempt to move through the edge $(u,v)$ until their movement is successful. Note that, if $v$ is a safe node, it requires 3 rounds (may not be contiguous) such that in those rounds the edge $(u,v)$ is present in the graph for the leader $a_1$ to reach $v$.
In this way, all three agents reach the safe node, or the leader finds the black hole. A flow-chart representing a cautious walk by a leader $a_1$, its first and second helpers $a_1'$ and $a_1''$ through a port $p$ is given in Figure \ref{fig:cautious_walk}.

\begin{figure*}[h!]
    \centering    
    \includegraphics[width=0.9\linewidth]{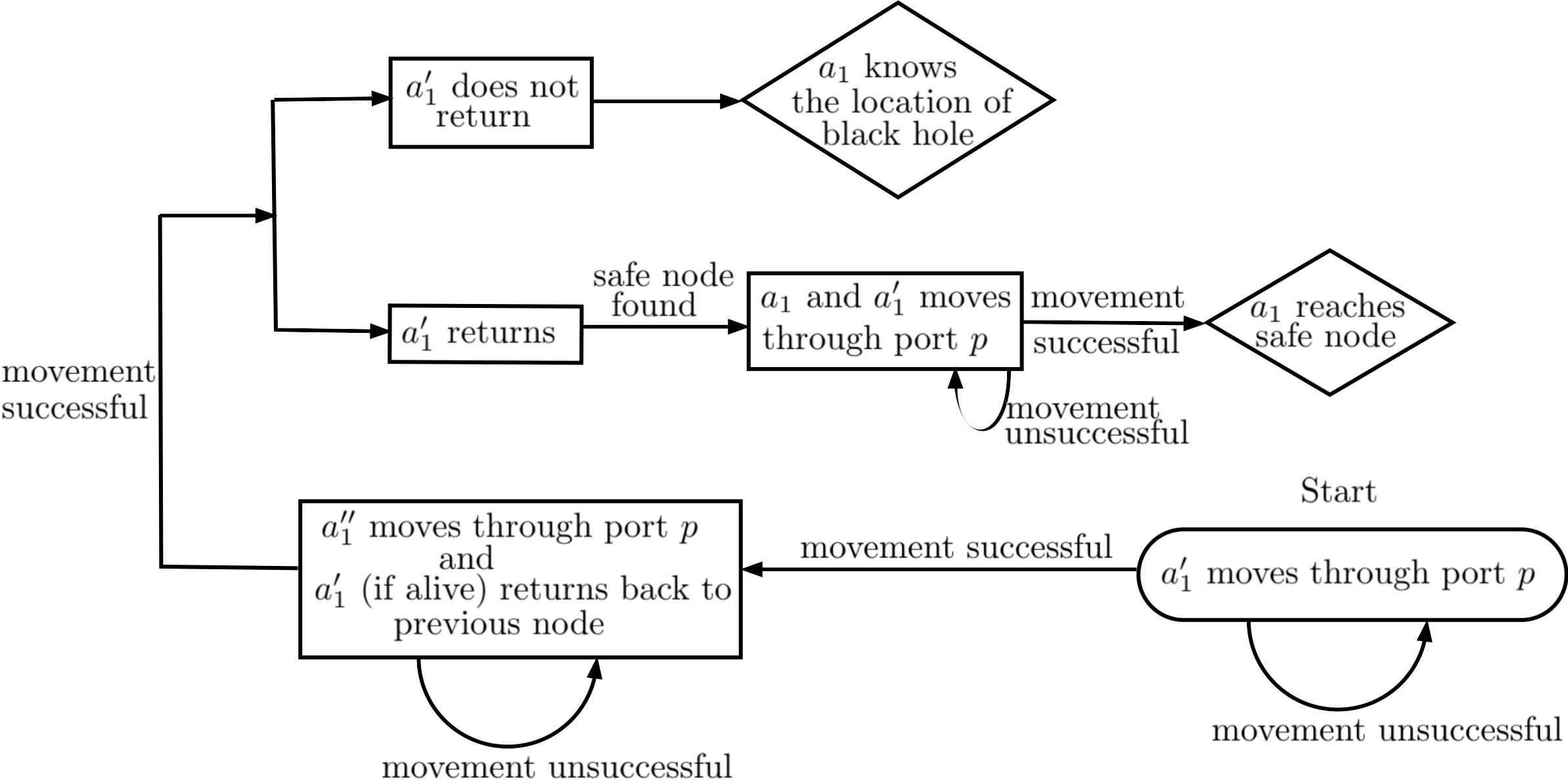}
    \caption{Flow Chart representing cautious walk by a leader ($a_1$) and helpers ($a_1'$ and $a_1''$)}
    \label{fig:cautious_walk}    
\end{figure*}


\subsection{The algorithm: multiple DFS with cautious walk}\label{sec: Algorithm} 
The algorithm to solve $1$-BHS in a general graph using $9$ agents is given in this section. 
We replace each move of an agent in the exploration technique mentioned in section \ref{Exp:1} with the cautious walk by $3$ agents as described in section \ref{Sec: Cautious_walk}. Precisely, the exploration performed by $a_1, a_2$, and $a_3$ (Section \ref{Exp:1}), is now performed by three groups that consist of three agents each. Let us suppose 9 agents are initially present at a safe node $v$ in graph $\mathcal{G}$. Also, suppose that the agents are arranged in the increasing order of their IDs, i.e. $a_1, a_2, a_3, a_1', a_2', a_3', a_1'', a_2'', a_3''$ such that $a_1$ is the agent with minimum ID and $a_3''$ with the largest ID. These agents are divided into three groups $G_1$, $G_2$, and $G_3$ as follows. The group $G_1$ consists of the agents $a_1, a_1'$, $a_1''$, where $a_1$ is the leader for $G_1$, $a_1'$ and $a_1''$ are the first and second helpers respectively. Similarly, $G_2$ consists of agents $a_2, a_2'$, $a_2''$ where $a_2$ is the leader, $a_2'$ and $a_2''$ are the first and second helpers respectively. Finally, group $G_3$ consists of agents $a_3, a_3'$, $a_3''$ where $a_3$ is the leader, $a_3'$ and $a_3''$ are the first and second helpers respectively. Each group performs cautious movement as mentioned in Section \ref{Sec: Cautious_walk}. The decision of movement based on the exploration technique is performed by the leaders $a_1, a_2$, and $a_3$. While the helpers do what their leaders instruct. Each node that consists only of a helper and is without its leader is considered to be an empty node. Several situations may arise due to a missing edge where these helpers may need to change the groups that were formed initially. We describe all such scenarios below.

Let us suppose $G_i$, which consists of $a_i$, $a_i'$, and $a_i''$, is currently at a node $u$. Suppose, according to the exploration technique, $G_i$ has to move through port $p$ to reach node $v$. It thus has to perform cautious movement. During the cautious movement, the first helper $a_i'$ moves through port $p$ at round $t$. This is followed by the return of the first helper back to $u$ and the movement of the second helper $a_i''$ through port $p$ at some later round when the edge is present. Finally, if the first helper returns to $u$, the leader understands that $v$ is safe. Thus, the leader, along with the first and second helpers, moves through port $p$ at some round. However, three possible scenarios may arise due to the missing edge:
\begin{itemize}
     \item[1.] The first helper $a_i'$ could not move through port $p$ at some round $t$ due to missing edge $(u,v)$. In this case, the agents $a_i$, $a_i'$, and $a_i''$ are present at the node $u$. If another group, say $G_j$ (i.e. the group that contains $a_j$), meets with the agents of $G_i$. If $i<j$, then it does not deviate from its path and performs its cautious walk to reach node $v$. If $i>j$, then $G_i$ has to deviate (leader $a_i$ in particular), then the leader $a_i$ decides based on the exploration technique and the helpers $a_i'$ and $a_i''$ move with their leader through cautious movement. No group change is required here.
    \item[2.] The first helper $a_i'$ moves through port $p$ at some round, and at the beginning of the next round, edge $(u,v)$ is missing. The leader $a_i$ and the second helper $a_i''$ are present at node $u$. Suppose group $G_j$ reaches node $u$. If $j>i$, then no group change is required here.
    
    If $j<i$, then based on the exploration technique, $G_i$ (leader $a_i$ in particular) has to move through a new port or start a new DFS. Then, it does not wait for its first helper $a_i'$ to return, instead, the first helper $a_j'$ of $G_j$ joins the group $G_i$ and becomes the first helper in group $G_i$. Since $G_j$ has to move through port $p$ (that is why $G_i$ has to skip this port or start a new DFS), it considers that its first helper is stuck at the other end of the missing edge. And agent $a_i'$ is included in $G_1$ when the edge appears as the first helper.    
    \item[3.] The second helper $a_i''$ moves through port $p$ while the first helper $a_i'$ returns to $u$ at some round and then at the beginning of the next round, the edge $(u,v)$ is deleted by the adversary: In this case, $a_i$ and $a_i'$ are present at node $u$. Similar to case 2, suppose group $G_j$ reaches node $u$. If $j>i$, then no group change is required here.

    If $j<i$, then $G_i$ (the leader $a_i$ in particular) has to skip the current port or start a new DFS; it does not wait for $a_i''$. Whereas it includes the second helper from group $G_1$, i.e. $a_j''$ in its group and continues its exploration. The leader of $G_j$ considers that its second helper is stuck at the other end of the missing edge. And, agent $a_i''$ is included in $G_j$ when the edge appears as the second helper.
\end{itemize}
Note that the decisions based on the exploration technique are made only when one leader meets with another leader. Otherwise, if a leader visits a node where a helper from another group is present, it considers the visited node as an empty node and proceeds further based on the exploration technique. In the following section, we show that one of the agents survives and can identify which port leads to the black hole.

\subsection{Correctness and analysis}

\begin{theorem}\label{thm:exp_grp}
    At least one of $G_1$, $G_2$, or $G_3$ explores the graph $G$ correctly in $O(|E|^2)$ rounds.
\end{theorem}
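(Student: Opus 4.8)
The strategy is to lift the three-agent exploration correctness (Theorem \ref{thm:1dynamic}) to the three-group setting, showing that a single permanently-missing edge can block at most two of the three groups $G_1, G_2, G_3$, so the third group completes a DFS of the footprint. I would first set up the analogy: the group $G_i$ led by $a_i$ plays the role that agent $a_i$ played in Section \ref{Exp:1}, and a ``successful move of $G_i$'' now means the leader $a_i$ has completed a cautious walk across an edge (using its two current helpers), which by the analysis of Section \ref{Sec: Cautious_walk} requires the edge to be present in at most three (not necessarily consecutive) rounds and leaves the leader alive unless the edge leads to the black hole. Since the graph contains no black hole in this lemma's setting (we are proving \emph{exploration} correctness, with the black-hole version handled afterwards), every cautious walk across a present edge eventually succeeds and all three members of the group arrive.

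The core of the argument mirrors the proof of Theorem \ref{thm:1dynamic}. Fix the DFS path $\mathcal{P}$ of the static footprint $G$ rooted at the common start node. If no edge needed by $G_1$'s leader ever fails when required, $G_1$ finishes its DFS in $O(|E|)$ leader-moves; since each leader-move is a cautious walk costing $O(1)$ rounds when the edge is available, and a present edge is traversed within a constant number of its available rounds, this is $O(|E|)$ rounds modulo adversarial delays, i.e. $O(|E|^2)$ overall once we account for waiting (by the same $32|E| \times 8|E|$ bookkeeping as in Theorem \ref{thm:1dynamic}). Otherwise, some edge $e=(u,v)$ fails and stays missing. I would argue exactly as before: $G_1$'s leader gets stuck at one endpoint of $e$; within $O(|E|)$ further rounds $G_2$'s leader either gets stuck at the other endpoint of $e$ (Case 1 of Theorem \ref{thm:1dynamic}), or treats an endpoint as a new root (backtrack case), or skips $e$ while in explore state — in the latter two cases it explores $G \setminus \{e\}$; and if it too gets stuck, then $G_3$'s leader, arriving at one of the two endpoints of $e$ within another $O(|E|)$ rounds, is forced either to start a fresh DFS from an endpoint (backtrack) or to skip $e$ (explore), and in all sub-cases it explores $G \setminus \{e\}$. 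Since $G \setminus \{e\}$ is connected (the adversary keeps $\mathcal{G}$ connected), exploring it means visiting every node. Combining: within $O(|E|^2)$ rounds, at least one of the three groups visits every node.

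The main obstacle, and the step needing the most care, is verifying that the \emph{group-reassignment rules} (the three scenarios listed at the end of Section \ref{sec: Algorithm}) do not disrupt this analysis. I would need to check: (i) whenever a lower-indexed group $G_j$ meets a stuck group $G_i$ ($j < i$) and donates a helper, the donated helper lets $G_i$'s leader keep moving cautiously, while $G_j$'s leader's own progress is unaffected because it still has (or will reacquire, when $e$ reappears) two helpers — so the leader of the lowest-indexed non-stuck group never loses its cautious-walk capability and hence the ``$G_1$ never deviates from $\mathcal{P}$'' analogue of Observation \ref{obs:path_of_exp_a1} still holds; (ii) a leader encountering only a helper treats the node as empty, so helper presence never triggers spurious DFS decisions; and (iii) at most two leaders can be simultaneously stuck at the two endpoints of the single missing edge, exactly as in the single-agent remark following Theorem \ref{thm:1dynamic}, because a third leader reaching an endpoint finds the other endpoint already ``claimed'' and is routed into the skip-or-restart branch. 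Once these three points are established, the time bound and the exploration guarantee transfer verbatim from Theorem \ref{thm:1dynamic}, and the $O(\log n)$ memory claim is inherited since each group stores only its leader's DFS variables plus a constant amount of helper/role bookkeeping.
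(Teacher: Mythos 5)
Your proposal is correct and follows essentially the same route as the paper: reduce to Theorem \ref{thm:1dynamic} by arguing that each group $G_i$ simulates agent $a_i$, with every simulated edge move realized by a cautious walk costing $O(1)$ rounds of edge availability, and with the group/helper reassignment rules verified to preserve that simulation (so the lowest-ID stuck group never deviates, mirroring Observation \ref{obs:path_of_exp_a1}). The only difference is one of presentation: the checks you list as items (i)--(iii) are exactly the case analysis (all of $G_i$ colocated; first helper stranded; second helper stranded) that the paper's proof spells out before invoking Theorem \ref{thm:1dynamic}.
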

\begin{proof}
 The exploration strategy performed by agents $a_1$, $a_2$, and $a_3$ (given in the Section \ref{Exp:1}) is replicated by three groups $G_1$, $G_2$, and $G_3$ consisting of three agents each (given in the Section \ref{sec: Algorithm}). As per the strategy discussed in the Section \ref{sec: Algorithm}, there are three groups, $G_1$, $G_2$, $G_3$, where $G_1$ consists of agents $a_1$, $a_1'$, and $a_1''$, $G_2$ consists of agents $a_2$, $a_2'$, and $a_2''$, and $G_3$ consists of agents $a_3$, $a_3'$, and $a_3''$ in the initial configuration, i.e., before any change of groups is done. The group $G_i$ replicates the movement strategy of exploration of agent $a_i$ for $i=\{1,2,3\}$. However, during this replication, several cases arise where a group may need to skip an edge or start a new DFS traversal (as per the exploration strategy using three agents), but at least one of its helpers is separated from the leader of the group. In this scenario, it is important to ensure that the decision taken by a group $G_i$ is the same as the decision taken by the agent $a_i$ in the exploration strategy. The following cases are possible: either $G_1$ and $G_2$ meets, or $G_1$ and $G_3$ meets, or $G_2$ and $G_3$ meets. However, during this meeting, it may also occur that all the three agents of the group are not together. Without loss of generality, let the leader of $G_2$ (i.e., $a_2$) be present at a node $u$ and it is stuck here due to the presence of a missing edge $(u,v)$. The following cases arise:

    \noindent\textbf{Case 1} All the agents of $G_2$ are colocated at $u$: In this case, the agents of $G_2$ are stuck at $u$ due to the missing edge $(u,v)$. It may happen that $G_1$ or $G_3$ may visit $u$ and have to take the same edge $(u,v)$. If $G_1$ visits, then as per the exploration strategy, $G_1$ must wait at $u$ and $G_2$ must skip this edge (if $G_2$ is in $explore$ state) or start a new DFS from $u$ (if $G_2$ is in $backtrack$ state). Since all the agents of $G_2$ are together, it can proceed as per the decision of the exploration strategy without any need for a change of groups. Thus, in this case, the traversal path of $G_2$ is exactly the same as that of $a_2$ (as per the exploration strategy).

    \noindent\textbf{Case 2} The second helper of $G_2$ is present at $u$ along with $a_2$: This implies the first helper of $G_2$ is present at $v$. In this case, $G_2$ remains stuck at $u$ until another group visits $u$. Either $G_1$ or $G_3$ (with all three agents co-located) visits $u$.
    \begin{itemize}
      \item   If $G_1$ visits $u$, then $G_1$ must remain at its original path and $G_2$ must deviate. Thus, in order to execute this, a change of group occurs. $G_2$ takes the first helper of $G_1$ and makes its own first helper. Hence, $G_2$ now has all three of its agents, so it can continue its traversal. $G_1$ (with $a_1$ and third helper) remains at $u$ and understands that the (old) first helper of $G_2$ is now the (new) first helper of $G_2$. This is the same as when $a_2$ is stuck at a node and $a_1$ visits that node attempting to move through the same edge, then $a_1$ gets stuck and $a_2$ continues its traversal either by skipping the edge or starting a new DFS traversal. 
       \item If $G_3$ visits $u$, then no change of groups is required in that case. As all the agents of $G_3$ are co-located, they can continue their traversal. 
    \end{itemize}
Thus, the movement strategy of the groups is exactly the same as the movement strategy of the agents in the exploration by 3 agents.

    \noindent\textbf{Case 3} The first helper of $G_2$ is present at $u$ along with $a_2$: This case is analogous to case 2. Here, the change of groups happens for the second helper of $G_2$. Thus, the traversal path of groups is the same as that of the corresponding agents in the exploration strategy.

   By virtue of Theorem \ref{thm:1dynamic} and since the groups $G_1$, $G_2$, and $G_3$ replicates the traversal strategies of the agents $a_1$, $a_2$, and $a_3$ in the exploration strategy, at least one of the groups explores the graph $\mathcal{G}$ correctly. Each edge movement by an agent $a_i$ (in the exploration strategy) is replicated by the group $G_i$ (in our algorithm) in $O(1)$ rounds. Thus, as per the theorem \ref{thm:1dynamic}, the time taken by at least one group $G_i$ to explore the graph $\mathcal{G}$ is $O(|E|^2)$ rounds.
\end{proof}

\begin{theorem}\label{thm:main1BHS}
The 1-BHS problem can be solved in $O(|E|^2)$ rounds by a team of $9$ agents, each equipped with $O(\log n)$ bits of memory, starting from a rooted initial configuration. Furthermore, each node requires only $O(\log n)$ bits of storage.
\end{theorem}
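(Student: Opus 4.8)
The plan is to obtain the result by running the three-group exploration of Section~\ref{sec: Algorithm} --- three groups $G_1,G_2,G_3$ of three agents each, every group consisting of a leader and two helpers, every leader executing the DFS exploration of Section~\ref{Exp:1}, and every attempted edge traversal of a leader realised as a cautious walk of its group --- and then arguing that whichever group succeeds in the exploration is forced to probe an edge incident to $BH$ and, when it does, its leader survives and learns that the probed port leads to the black hole. By Theorem~\ref{thm:exp_grp}, at least one group, say $G_i$, faithfully replicates the traversal of agent $a_i$ in the three-agent exploration; the group-reassignment rules of Section~\ref{sec: Algorithm} are exactly what keep this replication valid whenever a helper is temporarily separated from its leader by a missing edge, and in particular a leader that is about to move always has a full complement of two helpers available for a cautious walk.

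First I would show that the leader of this exploring group necessarily attempts a cautious walk through some edge incident to $BH$. Since $BH$ disposes of any visiting agent without leaving a trace, its whiteboard never records any DFS information and no agent ever returns from it; hence, from the viewpoint of every neighbour $u$ of $BH$, the port at $u$ leading to $BH$ keeps appearing \emph{unexplored} until a cautious walk through it has actually been carried out. Because the footprint stays connected, every neighbour of $BH$ is reachable, and Theorem~\ref{thm:exp_grp} guarantees that the DFS of the exploring leader visits every safe node not cut off by a permanently missing edge; combining these facts (and using that a permanently missing edge can block at most two of the three groups, so some group still reaches a neighbour of $BH$ via a path avoiding it, along an edge to $BH$ that must then be present by connectivity) one gets that some leader eventually stands at a neighbour $u$ of $BH$ with the port to $BH$ still unexplored, and its DFS then selects that port, triggering a cautious walk.

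Next I would invoke the correctness of the cautious walk from Section~\ref{Sec: Cautious_walk}. When the group cautiously walks from $u$ towards $BH$, the first helper and then the second helper cross the edge $(u,BH)$ (in possibly non-contiguous rounds, as soon as it is present) and are destroyed at $BH$, while the leader remains at $u$. The successful crossing of the second helper certifies to the leader that edge $(u,BH)$ was present in that round, yet the first helper never returns; this is precisely the configuration that the three-agent cautious walk is designed to distinguish from a transiently missing edge, so the leader correctly concludes that the port leads to the black hole. At that point the leader has identified an edge associated with $BH$, writes this fact, and terminates, solving $1$-BHS. Crucially the leader is never itself destroyed, because a leader traverses a port only after both helpers have certified the destination safe; thus at least one agent survives.

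For the bounds: every edge move of the exploration of Section~\ref{Exp:1} is simulated by a cautious walk using only $O(1)$ additional rounds on top of the adversarial waiting already counted, so by Theorem~\ref{thm:exp_grp} the whole procedure halts within $O(|E|^2)$ rounds. Each agent stores the $O(\log n)$-bit DFS state of Theorem~\ref{thm:1dynamic}, a constant number of bits encoding its role and group, and $O(\log n)$-bit port indices used inside the cautious walk, for $O(\log n)$ memory total; each node's whiteboard holds the DFS records of the (at most three) leaders and a constant amount of cautious-walk coordination data, i.e.\ $O(\log n)$ bits. The step I expect to be the main obstacle is the second one: pinning down that no schedule of permanently and transiently missing edges --- subject to ``at most one missing edge per round'' and connectivity of the footprint --- can make the DFS of every exploring leader avoid or stall before probing an edge incident to $BH$, and that the cautious-walk signal stays unambiguous throughout; both rest on the three-fold redundancy built into the exploration and into the cautious walk.
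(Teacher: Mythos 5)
Your proposal follows essentially the same route as the paper's own proof: both reduce the statement to Theorem~\ref{thm:exp_grp} (at least one of the groups $G_1,G_2,G_3$ replicates the three-agent exploration and covers $G$ within $O(|E|^2)$ rounds), then invoke the correctness of the three-agent cautious walk of Section~\ref{Sec: Cautious_walk} to conclude that the surviving leader of an exploring group identifies the port leading to $BH$, and close with the same $O(\log n)$ accounting for agent memory and per-node whiteboard storage. Your additional argument for why an exploring group must eventually probe an edge incident to $BH$ merely makes explicit what the paper states implicitly ("each node is visited by at least one group"), so the approach is essentially identical and correct at the paper's level of rigor.
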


\begin{proof}
   As per Theorem \ref{thm:exp_grp}, each node is visited by at least one of the groups $G_1$, $G_2$, or $G_3$ in $O(|E|^2)$ rounds. Since each group proceeds according to the cautious walk, and a black hole can be determined by a group of three agents using the cautious walk strategy, it is ensured that at least one of the groups determines the black hole. Hence, 1-BHS can be solved by a team of 9 agents in $O(|E|^2)$ rounds. 
    
    In Theorem \ref{thm:1dynamic}, we have shown that each agent utilizes \( O(\log n) \) memory to maintain all parameters. In our algorithm for solving 1-BHS, agents need to remember who the helpers, leaders, and members of the groups are. This can also be accomplished using \( O(\log n) \) memory. Therefore, agents need $O(\log n)$ memory to solve 1-BHS.

    At each node $v$, agent $a_i$ writes $wb_{v}(a_i.\mathit{parent}, a_i.\mathit{label})$. Since the number of agents is constant, and $a_i.\mathit{parent}$ represents a port number, it can be stored using $O(\log n)$ bits. The value $a_i.\mathit{label}$ is bounded by $O(|E|^2)$, and since $|E| \leq n^2$, this implies $a_i.\mathit{label}$ can be stored using $O(\log |E|) = O(\log n)$ bits. Hence, the information written at each node requires $O(\log n)$ bits. This completes the proof. 
    \end{proof}

\section{Impossibility of $f$-BHS}\label{sec:imp_fbhs}
In this section, we show that it is impossible to find the black hole in a time-varying graph $\mathcal{G}$ using $2f+1$ agents given that the adversary can remove at most $f$ edges from $G$ (i.e., footprint graph). Our impossibility result is as follows.

\begin{theorem}
Let $\mathcal{G}$ be a TVG in which each node has infinite storage. Then, the $f$-BHS problem cannot be solved by $2f+1$ mobile agents, even if the agents have infinite memory, are initially co-located at a safe node, and have no prior knowledge of the value of $f$.
\end{theorem}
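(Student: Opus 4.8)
Assume an algorithm $\mathcal{A}$ solves $f$-BHS from a rooted start with $2f+1$ agents against every admissible adversary, even with infinite agent memory and unbounded whiteboards. The plan is to exhibit one footprint $G$ together with an adversarial edge-removal schedule on which $\mathcal{A}$ fails. I will build $G$ so that $BH$ has degree exactly $f+1$, with neighbours $u_0,\dots,u_f$, and so that $G$ stays connected after deleting any $f$ of the edges $e_i=(u_i,BH)$ — for instance, $f+1$ internally disjoint paths between a hub $H$ and $BH$, with $u_i$ the last internal vertex of the $i$-th path and the common safe root $r$ attached to $H$. Two features will be exploited: since $\deg(BH)=f+1>f$, the adversary can freeze at most $f$ of $e_0,\dots,e_f$ in a round, so at least one is always present; and each $u_i$ carries at least one further edge (besides $e_i$) whose far endpoint is safe, so that an agent could in principle deduce "$e_i$ leads to $BH$'' only after verifying that every other edge at $u_i$ is safe — something the adversary can always obstruct at any $u_i$ currently being inspected by spending one slot to keep one of its non-$BH$ edges absent.

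\textbf{A confirmation lemma.} In this model an agent learns an edge's status only by attempting to cross it, and crossing $e_i$ is fatal and leaves no trace at $u_i$. I will formalize a notion of "$e_i$ confirmed at round $t$'' and prove, by induction on rounds, that a surviving agent may correctly report a $BH$-edge only if it is confirmed, and that confirmation can arise in only two ways: (a) \emph{direct confirmation} — some agent crosses $e_i$ (and dies) and later another agent crosses $e_i$ again while the first has not returned, recorded via the whiteboard (this requires $e_i$ to be present on two distinct rounds with appropriate agents at $u_i$, i.e.\ the cautious-walk signature); or (b) \emph{elimination} — every other edge at $u_i$ has been verified safe, which, since the agent does not know which port at $u_i$ is fatal, requires a three-agent cautious-walk presence at $u_i$.

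\textbf{Adversary strategy and the counting.} The adversary maintains the invariant that no $BH$-edge is ever confirmed. For $\mathcal{A}$ to make headway it must, in some round, have an agent poised to cross \emph{each} of $e_0,\dots,e_f$ (a ``full threat''); otherwise the adversary freezes all poised $BH$-edges — at most $f$ of them — and, within the remaining budget, keeps one non-$BH$ edge frozen at each inspected neighbour, so neither (a) nor (b) can advance. On a full-threat round the adversary is forced to leave exactly one $e_i$ present; by pigeonhole (only $2f+1$ agents over $f+1$ neighbours) it can pick an $e_i$ carrying a single poised agent, who crosses it and is destroyed — costing $\mathcal{A}$ exactly one agent and yielding no confirmation, since $e_i$ was present only that round. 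Iterating, each full-threat round costs one agent and leaks one fresh $BH$-edge; after at most $f+1$ such rounds all of $e_0,\dots,e_f$ have been leaked once, $f+1$ agents are destroyed, and only $f$ survive. But a further full threat — the sole way to force a second crossing of some $e_i$, hence the only route to (a), and also the only way to free all non-$BH$ edges at some neighbour for (b) — now needs $f+1$ poised agents at $f+1$ distinct nodes, and only $f$ remain. Thereafter the adversary freezes all (at most $f$) poised $BH$-edges every round and keeps an extra edge frozen at every inspected neighbour, so no $e_i$ is ever crossed a second time and no elimination ever completes. Hence no $BH$-edge is ever confirmed: if $\mathcal{A}$ never halts it fails; if it halts naming a $BH$-edge, that edge is unconfirmed, and the adversary's freedom to have placed $BH$ at any of the $f+1$ symmetric positions makes the answer incorrect.

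\textbf{Expected main obstacle.} The delicate part is making the final counting airtight — ruling out every subtler strategy with $2f+1$ agents: splitting into many small parties, using whiteboards to "confirm'' with fewer than three agents near a $u_i$, interleaving partial direct confirmation at one neighbour with partial elimination at another, or re-positioning survivors to catch a forced leak. I expect this to require a potential argument (roughly, surviving agents minus the number of still-``confusable'' $BH$-neighbours) that the adversary keeps nonpositive, together with a careful indistinguishability argument showing the agents' joint view stays consistent with $BH$ occupying a different one of $u_0,\dots,u_f$, which is what finally defeats any terminating $\mathcal{A}$.
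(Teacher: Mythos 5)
Your proposal follows the same strategic skeleton as the paper's proof: an adversary-based impossibility on a footprint in which the black hole has degree $f+1$, where crossings of black-hole edges are allowed only one uninformed agent at a time (so each such crossing is a silent, uninformative death), $f+1$ agents are thereby spent, and the remaining at most $f$ agents are then defeated by the adversary's per-round budget of $f$ removals. The paper instantiates this on the clique $K_{f+2}$ with explicit adversary rules (allow a lone uninformed crossing; delete the edge whenever an informed agent or two or more agents try to cross it; once at most $f$ agents remain, delete exactly the edges they attempt, forever).

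However, as written your argument has concrete gaps. First, your confirmation route (b) is not a valid confirmation mechanism: verifying that every other edge at $u_i$ leads to a safe node says nothing about whether the node across $e_i$ is the black hole (it could be yet another safe node), so the extra safe edge you attach at each $u_i$ and the budget you spend ``keeping one non-BH edge frozen at each inspected neighbour'' address a non-threat; moreover, your own accounting breaks there, since freezing up to $f$ poised black-hole edges leaves no remaining budget for those extra frozen edges. Second, and more importantly, your hub-and-paths footprint lacks the property that makes the paper's endgame work: $K_{f+2}$ is $(f+1)$-edge-connected, so once at most $f$ agents survive, the adversary can remove precisely the (at most $f$) edges they attempt in every round while keeping the graph connected, freezing the configuration permanently; nothing more needs to be said about what roaming survivors might still learn. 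In your sparse graph this is impossible (the edge $(r,H)$ is a bridge, and various combinations of path edges cannot all be removed), so the survivors keep moving through the safe region and your conclusion rests entirely on the unproven indistinguishability claim that no sequence of safe moves plus blocked attempts ever yields certainty --- exactly the step you defer to the ``main obstacle.'' Closing it requires exhibiting, for any claimed edge, a consistent alternative world (e.g., the black hole one step beyond the claimed edge) together with an alternative adversary that reproduces the survivors' views within the budget of $f$; the paper sidesteps this burden precisely by choosing a highly edge-connected footprint on which all residual movement can be blocked.
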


\begin{proof}
 We prove the above statement by contradiction. Let us consider a clique $K_{f+2}$ of size $f+2$, and one of the nodes of $K_{f+2}$ is a black hole. Let $v_0$, $v_1$, $v_2$, $v_3$, \ldots, $v_{f+1}$ be the vertices of $K_{f+2}$ and edge $e_i$ connects two nodes $v_i$ and $v_{0}$, where $i \in [1,\, f+1]$. In $K_{f+2}$, $e_{ij}$ denotes the edge between nodes $v_i$ and $v_j$, where $i,\, j \in [1,\, f+1]$ and $i < j$. Without loss of generality, let $v_0$ be a black hole. An example corresponding to $f=3$  is shown in Fig. \ref{fig:k-dyn_lower_bound}.

 Let the agents $\{a_1, a_2, ..., a_{2f+1}\}$ be co-located at safe node of $K_{f+2}$. Without loss of generality, let all agents be at node $v_1$. Suppose there exists an algorithm $\mathcal{A}$, which successfully finds the black hole position in $\mathcal{G}$ with $2f+1$ agents. Let the adversary follow the following rules to remove the edges from the graph at the beginning of round $t$ when the algorithm $\mathcal{A}$ proceeds: 

\begin{itemize}
\item\textbf{R1:} An agent $a$ is at node $v_i$, with no information regarding another agent already traveled through edge $e_i$. If agent $a$ moves through edge $e_i$ in round $t$, then $e_i$ is not removed.
    \item \textbf{R2:} An agent $a$ is present at node $v_i$, and it understands that some other agent already traveled through edge $e_i$. If agent $a$ moves through edge $e_i$ in round $t$, then $e_i$ is removed.
    \item \textbf{R3:} If two or more agents are at node $v_i$ and some of the agents or all the agents want to go through edge $e_i$, then edge $e_i$ is removed.
     \item \textbf{R4:} If two or more agents are at $v_i$ and some or all of them want to go through $e_{ij}$, then edge $e_{ij}$ is not removed.
   \item \textbf{R5:} In round $t$, suppose there are at most $f$ agents in $K_{f+2}$, and each agent is at some safe nodes of $K_{f+2}$. These agents can move through at most $f$ edges at round $t$, and the adversary can remove at most $f$ edges as well. Therefore, the adversary stops their movement by removing those edges.
\end{itemize}
    
     
  
Initially, $2f+1$ agents are co-located at node $v_1$. All agents are executing the algorithm $\mathcal{A}$. At some round $t$, if the agents do not move through any edge $e_i$ at all, then it is never possible to determine the black hole. Therefore, at least one agent needs to go through an edge $e_i$ from node $v_i$, where $i\geq 1$. According to the rules mentioned above, it is possible only when rule R1 is true, i.e., agent $a$ tries to move from a node $v_i$ towards the node $v_{0}$, and it has no prior information that some agent moves through edge $e_i$ in an earlier round. Suppose at round $t$, one agent dies at $v_{0}$ when it moves through edge $e_i$, and all remaining agents have the information, i.e., the port information corresponding to edge $e_i$ used by the agent to move. At the round $t+1$, we have $2f$ many agents in graph $\mathcal{G}$ and $f$ nodes without any information on the whiteboard. At the round $t'>t$, a team of agents or a single agent reaches node $v_i$. Whenever an agent tries to go through edge $e_i$ (according to the information it has), the adversary removes the edge $e_i$ at the beginning of round $t'$ to block its movement (using rule R2). In such a way, at least $f+1$ agents die at node $v_0$, and at most $f$ many agents have information, i.e., through which $e_i$ agents have moved through from node $v_i$ in earlier rounds. In the system, at most $f$ agents are left and they are at safe nodes of $K_{f+2}$. As the adversary has the capability to remove at most $f$ edges in each round, with no more than $f$ agents remaining in the system, the movement of agents can be permanently blocked (using rule R5). This completes the proof.
    \end{proof}

\begin{figure}
    \centering
    \includegraphics[width=0.4\linewidth]{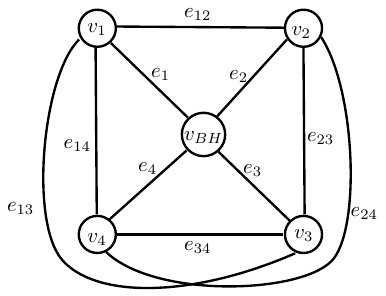}
    \caption{Construction of the graph $K_5$ when $f=3$ for impossibility of $f-$BHS}\label{fig:k-dyn_lower_bound}
\end{figure}

\section{Algorithm for $f$-BHS}\label{sec:f dynamic}
In this section, we'll talk about how to find the black hole with the help of $6f$ agents starting from a rooted initial configuration in $G$ when the adversary can remove at most $f$ edges from $G$ at each round. Again our approach is to do a graph exploration by cautious walk. Here we use the exploration strategy of \cite{Nicolo_21}. In Section \ref{sec:pre}, we discuss one existing algorithm for the exploration $\mathcal{EXPD}$ for $l$-bounded 1-connected graph $\mathcal{G}$. Recall from Section \ref{sec:pre}, $l$-bounded 1-connected $\mathcal{G}$ means that in each round $r$, $\mathcal{G}_r$ is connected and the adversary can remove at most $l$ edges from $G$ (footprint of $\mathcal{G}_r$). In our model, we consider that $\mathcal{G}$ is a time-varying graph (TVG) in which at most $f$ edges can be deleted by the adversary from $G$ in each round. It is not difficult to observe that $0\leq l \,(=f)$. The TVG $\mathcal{G}$ is connected at each round. Therefore, we can say that $\mathcal{G}$ is $f$-bounded 1-connected graph. According to Theorem \ref{thm:EXP_algo} in Section \ref{sec:pre} from \cite{Nicolo_21}, $2f$ many agents can explore the graph if there is a leader. Their algorithm requires $O(\log n)$ storage per node in the form of whiteboard. In our case, since all agents are at the same node initially, they can fix the minimum ID as the leader ID, and the remaining IDs are non-leader IDs. As we are on a $f$-bounded 1-connected graph, we use the below result (refer Theorem \ref{thm:EXP_algo} Section \ref{sec:pre}) for exploring a $f$-bounded 1-connected TVG without any black hole.

\begin{theorem} \label{thm:EXP_algo1}
A graph $\mathcal{G}$ in which the adversary can delete at most $f$ edges from $G$ in each round, $ 2f$ many agents can explore such a time-varying graph within $\Delta^{n} (\Delta +1)^{2f+n} (n-1)^{2f}$ rounds, where $\Delta$ is the maximum degree of $G$, provided each node is equipped with $O(\log n)$ storage in the form of whiteboard.
\end{theorem}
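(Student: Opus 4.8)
\noindent\emph{Proof proposal.} The plan is to obtain this statement as a direct instantiation of Theorem~\ref{thm:EXP_algo} of \cite{Nicolo_21}, so the work consists entirely in checking that the dynamic-graph model here is a special case of the model used there and that the exploration algorithm $\mathcal{EXPD}$ can be launched in our setting. First I would verify that the time-varying graph $\mathcal{G}$ arising in our model lies in the class $\mathcal{W}(l)$ with $l=f$. By definition $\mathcal{G}\in\mathcal{W}(f)$ means that $\mathcal{G}$ is always connected and that $|\overline{E_t}|\le f$ at every round $t$. The first condition is exactly the adversarial constraint we impose (the graph must remain connected in every round), and the second holds because the adversary deletes at most $f$ edges per round. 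Hence $\mathcal{G}$ is an $f$-bounded $1$-interval connected TVG, with bound parameter equal to $f$.

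Next I would address the fact that $\mathcal{EXPD}$ presupposes one designated leader and the remaining agents as non-leaders, with every agent aware of its own role. In our setting this is supplied for free by the rooted initial configuration: all $2f$ agents occupy the common node $v_r$ in round $0$, so during that round's Look phase each agent reads off the identifiers of everyone present, and every agent consistently elects the agent of minimum identifier as the leader (labelling itself a non-leader otherwise). This assignment is unanimous and permanent, so the team can run $\mathcal{EXPD}$ as specified; note that it requires no knowledge of $n$, $f$, or the team size.

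With these two points in place, invoking Theorem~\ref{thm:EXP_algo} with $l=f$ yields that $2f$ agents with a leader explore every $\mathcal{G}\in\mathcal{W}(f)$ within $\Delta^{n}(\Delta+1)^{2f+n}(n-1)^{2f}$ rounds, which is precisely the claimed bound; and since $\mathcal{EXPD}$ operates with $O(\log n)$ bits of whiteboard per node, the storage hypothesis of the statement is satisfied. I would end the proof here.

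I do not expect a genuine obstacle, since the argument is a reduction rather than a fresh construction; the only real content is verifying the hypotheses of the imported theorem. The place I would be careful is to state explicitly that the rooted (co-located) start is what makes the leader-designation step deterministic and immediate — without it the reduction to $\mathcal{EXPD}$ would not go through unchanged — and to remark that using exactly $2f$ agents is consistent with the bound parameter $l=f$ of our graph class, so no agents are over-provisioned.
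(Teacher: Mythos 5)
Your proposal is correct and follows essentially the same route as the paper: verify that $\mathcal{G}$ is an $f$-bounded $1$-interval connected TVG, use the rooted configuration to elect the minimum-ID agent as the leader required by $\mathcal{EXPD}$, and then invoke Theorem~\ref{thm:EXP_algo} with $l=f$ to obtain the stated round and storage bounds. No further changes are needed.
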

\begin{proof}
    In \cite{Nicolo_21}, the algorithm $\mathcal{EXPD}$ uses a leader to solve the exploration problem for $l$-bounded 1 connected graph $\mathcal{G}$. In our model, there is no leader. All agents are at the same node initially; they can fix the minimum ID as the leader ID, and the remaining IDs are non-leader IDs. They execute the algorithm $\mathcal{EXPD}$. Since the algorithm guarantees the exploration for $l$-bounded 1-connected graph. Due to $0\leq l \,(= f)$, $\mathcal{G}$ is $f$-bounded 1-connected graph. Therefore, $\mathcal{EXPD}$ guarantees the exploration in TVG $\mathcal{G}$ with the help of $2f$ many agents when the adversary can remove at most $f$ edges from $G$ at each round. The time complexity for exploration is the time complexity of $\mathcal{EXPD}$, i.e., $\Delta^{n} (\Delta +1)^{2f+n} (n-1)^{2f}$ rounds, where $\Delta$ is the largest degree of $G$.
\end{proof}

\noindent\textbf{The algorithm:} To solve the $f$-BHS problem, we consider $6f$ many agents at a safe node of $G$. Let $a_1$, $a_2$, \ldots, $a_{6f}$ agents are at some node $v$. Without loss of generality, let $a_i.ID<a_{i+1}.ID$. We create $2f$ many groups of agents, say $G_1$, $G_2$, \ldots, $G_{2f}$. Let $a_{3i-2}$, $a_{3i-1}$ and $a_{3i}$ are in group $G_i$, where $i \in [1, \;2f]$. Initially, all agents consider $G_1$ as the leader group and the remaining groups as non-leader groups. Each group follows the algorithm $\mathcal{EXPD}$ as follows. For each round of $\mathcal{EXPD}$, each group $G_i$ performs cautious walk as mentioned in Section \ref{Sec: Cautious_walk}. Theorem $\ref{thm:EXP_algo1}$ guarantees the exploration of $G$. This means each node is visited by at least one group $G_i$. And, cautious walk (Section \ref{Sec: Cautious_walk}) ensures at least one agent learns the black hole. Note that each of the $2f$ groups have constant many agents. According to Theorem \ref{thm:EXP_algo1}, their algorithm requires $O(\log n)$ storage at each node. Thus, our algorithm also requires $O(\log n)$ storage at each node.

\begin{theorem} \label{thm:BHS_algo}
The problem of $f$-BHS can be solved by $6f$ agents within $3\Delta^{n} (\Delta +1)^{2f+n} (n-1)^{2f}$ rounds, where $\Delta$ is the maximum degree of $G$. 
\end{theorem}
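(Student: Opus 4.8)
The plan is to combine the exploration guarantee of Theorem \ref{thm:EXP_algo1} with the cautious-walk primitive of Section \ref{Sec: Cautious_walk}, exactly mirroring the structure of the argument used for the $9$-agent result (Theorems \ref{thm:exp_grp} and \ref{thm:main1BHS}), but now with $2f$ groups of three agents each running the exploration algorithm $\mathcal{EXPD}$. First I would recall that in a TVG where the adversary deletes at most $f$ edges per round while keeping $\mathcal{G}$ connected, Theorem \ref{thm:EXP_algo1} guarantees that $2f$ agents with a designated leader explore a black-hole-free footprint within $\Delta^{n}(\Delta+1)^{2f+n}(n-1)^{2f}$ rounds. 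The $6f$ agents are partitioned into $2f$ groups $G_1,\dots,G_{2f}$ of three agents each, ordered by ID, with $G_1$ acting as the leader group; each \emph{group} plays the role of a single agent in $\mathcal{EXPD}$, and every elementary move of a group along an edge is realized by the cautious walk (leader, first helper, second helper) described in Section \ref{Sec: Cautious_walk}.

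The key steps, in order, are: (i) argue that a single cautious-walk traversal of one edge costs only $O(1)$ rounds when no edge adjacent to that traversal is permanently removed — indeed Section \ref{Sec: Cautious_walk} shows three successful crossings of $(u,v)$ suffice for the whole group to cross, so the round overhead per simulated move is a constant factor (namely $3$ up to lower-order terms); (ii) argue that the cautious walk never loses the safety invariant: whenever a group attempts to cross into the black hole, the two helpers may be destroyed but the leader survives and learns which port leads to $BH$; (iii) handle the subtlety that a group may be split by a missing edge at the moment $\mathcal{EXPD}$ requires it to act — here I would invoke the same group-reconstitution argument as in Theorem \ref{thm:exp_grp}: a higher-indexed group that finds itself separated borrows the appropriate helper from the lower-indexed group currently blocking it, so that each group always has three agents available exactly when $\mathcal{EXPD}$ asks it to move, and the simulated trajectory of each group coincides with the trajectory of the corresponding virtual agent in $\mathcal{EXPD}$; (iv) conclude that since $\mathcal{EXPD}$ with $2f$ virtual agents explores the whole footprint, at least one group visits every node, in particular some group performs a cautious walk across an edge incident to $BH$, and that group's leader terminates having identified an edge of the black hole.

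For the round complexity, I would multiply: $\mathcal{EXPD}$ terminates in $\Delta^{n}(\Delta+1)^{2f+n}(n-1)^{2f}$ of its own rounds, each such round is simulated by one cautious-walk step taking at most $3$ real rounds in the no-obstruction case (and the obstruction case only reroutes agents along the same $\mathcal{EXPD}$ schedule, so it does not blow up the count beyond this constant factor), giving the stated bound $3\Delta^{n}(\Delta+1)^{2f+n}(n-1)^{2f}$. The memory and storage claims follow as in Theorem \ref{thm:main1BHS}: each agent stores its ID, group index, role, current port information and the $O(\log n)$-bit state needed by $\mathcal{EXPD}$, and each node's whiteboard carries only the $O(\log n)$ bookkeeping required by $\mathcal{EXPD}$ plus constant-size cautious-walk markers.

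The main obstacle I anticipate is step (iii): making precise that the group-reconstitution mechanism keeps every group's simulated path faithful to $\mathcal{EXPD}$ even though $\mathcal{EXPD}$'s behavior depends on the full history of which virtual agent moved where, and ensuring that when helpers are permanently lost to the black hole there are still enough agents to keep $2f$ groups alive. The saving grace is that $\mathcal{EXPD}$ only needs $2f$ cooperating units and a permanently missing edge can strand at most two groups' worth of helpers the way a missing edge stranded two agents in the $1$-BHS analysis; since we start with three agents per group, and the cautious walk sacrifices at most two agents only when a genuine black-hole edge is crossed (after which the objective is already met), the invariant ``every group that still needs to move has three agents'' is maintainable until termination. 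I would therefore spend most of the write-up carefully stating this invariant and verifying it case-by-case against the three split-scenarios enumerated in Section \ref{sec: Algorithm}, generalized from $3$ groups to $2f$ groups.
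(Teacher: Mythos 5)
Your proposal is correct and follows essentially the same route as the paper: partition the $6f$ agents into $2f$ groups of three, let the groups simulate the virtual agents of $\mathcal{EXPD}$ (with the minimum-ID group as leader), replace every edge traversal by the cautious walk of Section \ref{Sec: Cautious_walk}, and combine the exploration bound of Theorem \ref{thm:EXP_algo1} with the constant factor $3$ per simulated move to get $3\Delta^{n}(\Delta+1)^{2f+n}(n-1)^{2f}$ rounds. If anything, your step (iii) on group reconstitution is more careful than the paper, which states the simulation and the bound without elaborating on how split groups are handled in the $f$-edge-deletion setting.
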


\begin{observation}
    If $f=1$ in this section, our algorithm can solve 1-BHS with the help of six agents, but it may take exponential time to {find} the black hole. Recall that, the algorithm in Section \ref{sec:1-dynamic} can solve 1-BHS with the help of 9 agents, but it takes $O(|E|^2)$ rounds. Therefore, despite requiring more agents, the algorithm in Section \ref{sec:1-dynamic} is faster than the one in this section.  
\end{observation}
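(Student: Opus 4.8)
The plan is to derive this observation directly from the two algorithmic results already established, since it is a specialization-and-comparison statement rather than a new construction. The two ingredients are Theorem \ref{thm:BHS_algo}, which bounds the general $f$-BHS algorithm, and Theorem \ref{thm:main1BHS}, which bounds the nine-agent $1$-BHS algorithm of Section \ref{sec:1-dynamic}.

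First I would instantiate Theorem \ref{thm:BHS_algo} at $f=1$. That theorem guarantees that $6f$ agents solve $f$-BHS within $3\Delta^n(\Delta+1)^{2f+n}(n-1)^{2f}$ rounds; substituting $f=1$ yields exactly six agents and a round bound of $3\Delta^n(\Delta+1)^{n+2}(n-1)^2$. Correctness for the case $f=1$ is inherited automatically and needs no separate argument: the $f$-BHS algorithm is proved correct for every $f\ge 1$ (it combines Theorem \ref{thm:EXP_algo1}, which explores any $f$-bounded $1$-interval-connected graph, with the cautious-walk guarantee of Section \ref{Sec: Cautious_walk}), and $f=1$ is simply one instance of that statement. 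Hence six agents indeed solve $1$-BHS.

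Next I would confirm that this round bound is genuinely exponential in $n$, so that the phrase ``may take exponential time'' is justified. Since $\Delta\ge 2$ in any non-trivial graph and appears as the factor $\Delta^n$, the bound $3\Delta^n(\Delta+1)^{n+2}(n-1)^2$ already grows at least like $2^n$ in the worst case, i.e. it is exponential in $n$. By contrast, Theorem \ref{thm:main1BHS} gives a nine-agent algorithm running in $O(|E|^2)$ rounds, and since $|E|\le\binom{n}{2}=O(n^2)$ this is $O(n^4)$, i.e. polynomial in $n$. Comparing the two, for all sufficiently large $n$ the exponential bound dominates the polynomial one, so the nine-agent algorithm of Section \ref{sec:1-dynamic} is asymptotically faster despite using three additional agents, which is exactly the claimed trade-off.

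I do not anticipate a genuine obstacle, as every ingredient is already in place; the observation is essentially a substitution followed by an asymptotic comparison. The only point requiring a line of care is verifying that the $f$-BHS time bound is exponential rather than polynomial, which follows immediately from the presence of the $\Delta^n$ factor in the bound of Theorem \ref{thm:BHS_algo}.
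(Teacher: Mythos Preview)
Your proposal is correct and matches the paper's treatment: the paper states this observation without proof, treating it as an immediate consequence of instantiating Theorem~\ref{thm:BHS_algo} at $f=1$ and comparing with Theorem~\ref{thm:main1BHS}. Your justification that the $\Delta^n$ factor forces exponential growth (noting $\Delta\ge 2$ for any connected graph on $n\ge 3$ nodes) is exactly the implicit reasoning the paper relies on.
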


\section{$1$-BHS from Rooted Configuration using 6 Agents}\label{sec:algo_6agents}
In this section, we improve the exploration strategy presented in Section \ref{Exp:1} by introducing an exploration algorithm using two mobile agents.\footnote{A similar strategy has been used in \cite{saxena_2024} to achieve dispersion.} This improved strategy ensures the complete exploration of a dynamic graph $\mathcal{G}$ under the assumption that it contains no black hole. Eventually, this helps to solve the 1-BHS problem using 6 agents starting from a rooted configuration in the footprint $G$.

\subsection{Exploration strategy with correctness}\label{EXP:2}

Let $A_1$ and $A_2$ be the two agents initially positioned at a node, say $v$, and $A_1.ID<A_2.ID$. Let the graph $G$ has no black hole. Our objective is that each node of $G$ is visited by at least one of $A_1$ and $A_2$. The exploration strategy proceeds as follows. Both agents begin their DFS traversal of the graph and continue as long as they do not encounter a missing edge. Suppose an agent reaches a node $u$ and needs to traverse port $p_i$ to reach the adjacent node $v$. If the edge $(u, v)$ is missing, then movement through port $p_i$ is not possible in that round. In this case, agent $A_1$ remains at $u$ and attempts to traverse port $p_i$ in subsequent rounds until it succeeds. Agent $A_2$, on the other hand, skips port $p_i$ and continues its DFS traversal or restarts its DFS. It is important to note that $A_1$ never deviates from its original DFS path. The agent $A_2$ continues skipping the missing ports if it is in $explore$ state. It continues doing so unless it has either completed its current DFS traversal and has no further ports left to explore or it has to backtrack via a missing port. In either situation, $A_2$ starts a new DFS traversal from its current node. Figure \ref{fig:exploration} describes the idea of our exploration strategy and Figure \ref{fig:exploration1} implements the idea on a graph. Similar to Section \ref{sec:1-dynamic}, agents execute their DFS in odd rounds, while even rounds are used to understand the success of movements from the previous odd round, and all parameters of agents are the same as mentioned in Section \ref{sec:1-dynamic}. 

\begin{figure}
    \centering
    \includegraphics[width=0.2\linewidth]{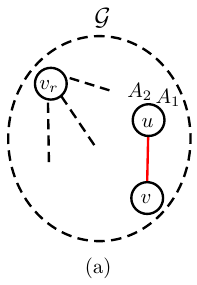}
    \hspace{0.8cm}
    \includegraphics[width=0.2\linewidth]{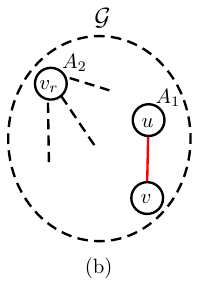}
    \hspace{0.8cm}
    \includegraphics[width=0.2\linewidth]{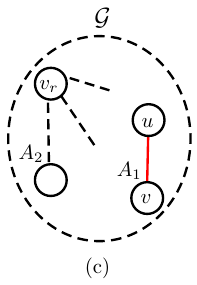}
    \caption{(a) Let TVG $\mathcal{G}$ has no black hole and agents $A_1$ and $A_2$ be initially positioned at a node $v_r$ (root node). Let the agents reach $u$, and they encounter a missing edge, (b) $A_1$ waits at $u$ but $A_2$ skips the edge $(u,v)$ and reaches at $v_r$ within $4|E|$ rounds if the edge $(u,v)$ is missing continuously for $4|E|$ rounds. Within another (at most) $4|E|$ rounds $A_2$ explores $G\setminus(u,v)$ if edge $(u,v)$ is missing, (c) If the edge $(u,v)$ appears before $A_2$ could explore the graph $G\setminus(u,v)$, $A_1$ moves a step forward.}
    \label{fig:exploration}
\end{figure}
\vspace{-0.2cm}
\begin{figure}
    \centering
    \includegraphics[width=0.25\linewidth]{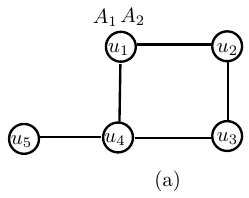}
    \hspace{0.8cm}
    \includegraphics[width=0.25\linewidth]{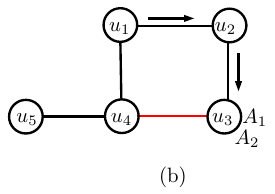}\\
\includegraphics[width=0.25\linewidth]{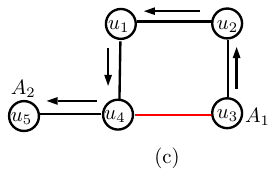}
\hspace{0.8cm}
    \includegraphics[width=0.25\linewidth]{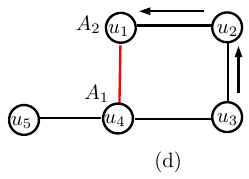}
    \caption{(a) Let $A_1$ and $A_2$ be present at $u_1$ in the initial configuration. (b) Both agents move as $u_1 \rightarrow u_2\rightarrow u_3$ and then encounter a missing edge. (c) $A_1$ waits at $u_3$ for the missing edge to reappear and $A_2$ moves as $u_3 \rightarrow u_2\rightarrow u_1 \rightarrow u_4 $. It attempts to move via edge $(u_4,u_3)$ but finds it missing, so $A_2$ skips it. Thus, it moves to reach $u_5$ and the exploration of the graph is done when the edge $(u_3,u_4)$ remains missing. (d) If $A_2$ finds a missing edge at $(u_1,u_4)$, then $A_1$ proceeds forward and reaches $u_4$. Since the movement by $A_2$ is continued even when it encounters a missing edge, this guarantees exploration of the graph.}
    \label{fig:exploration1}
\end{figure}

Now, we proceed with the correctness of our exploration strategy. Assume that $\mathcal{P}$ denotes the DFS traversal path of the static graph $G$ rooted at a node $v_r$, i.e., when all edges of $G$ are present at all times in the time-varying graph $\mathcal{G}$. We aim to establish the correctness of an exploration strategy for the two agents in the dynamic setting where, in each round, at most one edge of $G$ can be missing.

We start with the following observation of our exploration algorithm with two agents. 

\begin{observation}\label{lem:A1_explore}
    If the adversary removes an edge $e$, agent $A_1$ remains at its current node and waits for the missing edge to reappear. In contrast, agent $A_2$ either skips the edge $e$ and continues its DFS traversal or initiates a new DFS from the current node. Notably, agent $A_2$ is never blocked by a missing edge, while agent $A_1$ strictly follows path $\mathcal{P}$ without deviation. As a result, any missing edge can obstruct the progress of at most one agent.
\end{observation}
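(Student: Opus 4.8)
The plan is to read the observation off the specification of the two-agent exploration strategy; it is a structural property of the algorithm and needs no global or amortized argument (that will come in the accompanying runtime theorem). First I would fix terminology: both $A_1$ and $A_2$ run independent DFS traversals, executing one DFS step in each odd round and using the following even round to learn whether that step succeeded, exactly as in Section \ref{Exp:1}; and I say that a missing edge $e$ \emph{obstructs} an agent in a round if that agent is forced to remain at its current node, unable to advance its DFS, because $e$ is the edge it is trying to cross.

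For $A_1$: by the strategy, whenever $A_1$ sits at a node $u$, computes port $p$ leading to a neighbour $v$, and the edge $(u,v)$ is absent in that odd round, $A_1$ leaves all of its parameters unchanged and simply re-attempts port $p$ in later odd rounds until it succeeds. Hence the sequence of \emph{successful} moves of $A_1$ is exactly the port sequence of the DFS of the static footprint $G$ rooted at $v_r$: $A_1$ never takes a port it would not take in the static DFS, so it stays on $\mathcal{P}$. Here Theorem \ref{Thm:DFS} is invoked only to recall that this static port sequence is a well-defined DFS traversal.

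For $A_2$: I would do a short case analysis on its DFS state at the moment it first meets a missing edge. If $A_2$ is in the $explore$ state at $u$ and its computed port $p$ is absent, then by the strategy it advances its port counter to $(p+1)\mod \delta_u$ and continues; if this exhausts the ports at $u$ --- so its current DFS is complete with nothing left to explore --- or if $A_2$ is in the $backtrack$ state and the port it must backtrack through is absent, then it starts a fresh DFS from its current node. In each of these cases $A_2$ has a concrete action to perform in the very next odd round; it is never left merely waiting for one particular edge to reappear, which is exactly what ``$A_2$ is never blocked by a missing edge'' asserts. Finally, since by hypothesis at most one edge $e$ is absent in any round, and since $A_2$ never waits on any edge, the only agent whose progress $e$ can stall is $A_1$, and only in the round(s) where $A_1$'s next move along $\mathcal{P}$ happens to cross $e$; therefore $e$ obstructs at most one agent.

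The main obstacle --- mild, as this is an observation --- is making the $A_2$ case analysis genuinely exhaustive: I must confirm that ``skip in $explore$'', ``restart when the current DFS completes'', and ``restart when backtracking through a missing port'' together cover every way $A_2$ can encounter an absent edge, so that no residual scenario forces $A_2$ to wait. A secondary check is the round bookkeeping: that both ``remain and re-attempt'' for $A_1$ and ``skip or restart'' for $A_2$ fit inside a single Look-Compute-Move cycle and are consistent with an agent learning of a failed move only in the following even round, so that no stale whiteboard read can push $A_1$ off $\mathcal{P}$.
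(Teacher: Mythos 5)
Your proposal is correct and follows essentially the same reasoning as the paper, which states this as an observation justified directly by the algorithm description in Section~\ref{EXP:2}: $A_1$ re-attempts its computed port without changing parameters (hence stays on $\mathcal{P}$), while $A_2$ skips missing ports in the $explore$ state and restarts a DFS when its traversal completes or it must backtrack through a missing port, so it never waits. Your explicit case analysis and round bookkeeping simply spell out what the paper leaves implicit; no gap.
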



 \begin{theorem}\label{thm:expwith2}
    At least one agent explores $\mathcal{G}$ in $O(|E|^2)$ rounds, where each agent uses only $O(\log n)$ bits of memory.
\end{theorem}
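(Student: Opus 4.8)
The plan is to mirror the argument of Theorem~\ref{thm:1dynamic}, but now with only two agents instead of three, exploiting Observation~\ref{lem:A1_explore}: a single missing edge can block at most one of the two agents, so if the blocked agent is $A_1$ then $A_2$ is free to finish, and if the blocked agent would be $A_2$ then $A_2$ simply skips the edge (or restarts DFS) and is never truly stuck. First I would argue the ``no failure'' case: if during some window of $8|E|$ rounds no edge needed by $A_1$ is missing when $A_1$ needs it, then by Theorem~\ref{Thm:DFS} (each DFS costs $\le 4|E|$ moves, hence $\le 8|E|$ rounds under the odd/even round scheme) agent $A_1$ completes the exploration of $\mathcal{G}$ along the static DFS path $\mathcal{P}$.

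Next I would handle the case where $A_1$ does get blocked. Suppose at some odd round $r$, agent $A_1$ is at node $u$, tries to cross $e=(u,v)$, and fails; by Observation~\ref{lem:A1_explore} it then waits at $u$ retrying $e$ forever (or until $e$ reappears), and in particular never leaves path $\mathcal{P}$. Meanwhile $A_2$ is running its own DFS. I would show that within the next $8|E|$ rounds $A_2$ is in one of the familiar situations: it arrives at $v$ and gets stuck trying to cross $(v,u)$ — but by Observation~\ref{lem:A1_explore}, $A_2$ never gets stuck, it skips $(v,u)$ and keeps going — or it arrives at $u$ in state $backtrack$ (treat $u$ as a new root, restart DFS, skip the missing edge on reaching the far endpoint), or it arrives at $u$ in state $explore$ (skip port toward $v$ and continue, eventually either hitting the root and restarting or reaching $v$ and skipping back). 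In every branch, within $O(|E|)$ further rounds starting from a (re)rooted DFS, agent $A_2$ explores $G\setminus\{e\}$, because the only edge it ever skips is the perpetually-missing $e$. Concretely, by round $r + c|E|$ for a small constant $c$ (I expect $c=24$ or so, as in the three-agent proof one accounts for: $8|E|$ for $A_2$ to reach an endpoint of $e$, $8|E|$ to get back to a root, $8|E|$ to explore $G\setminus\{e\}$), agent $A_2$ has finished — unless $e$ reappears first, in which case $A_1$ resumes its precomputed traversal of $\mathcal{P}$ and completes exploration within $8|E|$ more rounds of edge-availability.

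Combining the two regimes gives the bound: as long as the currently ``pending'' missing edge of $A_1$ stays missing for more than $O(|E|)$ consecutive rounds, $A_2$ finishes; otherwise every missing-edge episode delays $A_1$ by only $O(|E|)$ rounds, and $A_1$ needs at most $8|E|$ genuinely productive rounds, each of which can be preceded by at most one $O(|E|)$-length blocking episode, for a total of $O(|E|)\cdot O(|E|) = O(|E|^2)$ rounds. Finally, the memory accounting is identical to Theorem~\ref{thm:1dynamic}: each agent stores $A_i.ID$, $A_i.success$, $A_i.label$, $A_i.state$, $A_i.pin$, $A_i.pout$, $A_i.r$; the only unbounded-looking quantity is $A_i.label$, which is at most $O(|E|^2)$ and hence fits in $O(\log|E|)=O(\log n)$ bits since $|E|\le n^2$; all other fields are $O(\log n)$ or $O(1)$. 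The main obstacle I anticipate is the careful case analysis showing that once $A_2$ restarts a DFS with a new root while exactly one edge $e$ is permanently absent, it really does visit every node of $G\setminus\{e\}$ — i.e.\ that skipping $e$ (possibly encountered from either endpoint, in either $explore$ or $backtrack$ state) does not cause $A_2$ to prematurely terminate or loop without covering some component; this is where Observation~\ref{lem:A1_explore} and the precise restart/skip rules of the exploration algorithm must be invoked, and it is the step most prone to hidden subtleties, exactly as in the three-agent version.
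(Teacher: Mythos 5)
Your proposal is correct and follows essentially the same route as the paper's proof: the same odd/even round scheme, the same case analysis on $A_2$ reaching an endpoint of the missing edge in $explore$ or $backtrack$ state, the same ``either the edge stays missing long enough for $A_2$ to explore $G\setminus\{e\}$, or it reappears and $A_1$ advances one step'' dichotomy yielding the $O(|E|)\times O(|E|)$ bound, and the identical memory accounting. The only differences are in the explicit constants (the paper uses a $16|E|$ window giving $128|E|^2$), which do not affect the asymptotic claim.
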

\begin{proof}
Suppose $A_1$ begins executing a DFS traversal from node $u_1$. As per the already discussed mechanism, agents move in odd numbered round and use even numbered rounds to detect whether their previous attempt to move was successful or not. If no edge-failure occurs within the first $8|E|$ rounds, then $A_1$ successfully explores the entire graph $G$.

Now suppose, at some round $r$, agent $A_1$ is at a node $u$ and it has to move through edge $(u,v)$ which is missing. Let $t>r$ be the next odd round in which $A_1$ retries to move through the edge $(u,v)$. We show that if this edge remains missing for another $16|E|$ rounds from round $r$, then $A_2$ successfully explores the graph. Agent $A_2$ also performs a DFS traversal independently initiated from the root node. Within the next $8|E|$ rounds after $t$, $A_2$ is possible to be in either of the following cases:
\begin{itemize}
    \item \textbf{Case 1}: It reaches the node $u$ (or $v$) in the $explore$ state and has to move through the edge $(u,v)$ (or $(v,u)$) but is unsuccessful in doing so.
    \item \textbf{Case 2}: It reaches the node $u$ (or $v$) in the $backtrack$ state and has to move through the edge $(u,v)$ (or $(v,u)$) but is unsuccessful in doing so.
\end{itemize}

In \textbf{Case 1}, $A_2$ skips the edge $(u,v)$ (or $(v,u)$) and continues its traversal further. Thus, within the next $8|E|$ rounds, $A_2$ either explores the graph $G\setminus(u,v)$ or reaches at the root node. After reaching at the root node, $A_2$ re-starts a new DFS traversal. Hence, by $t+16|E|$ rounds, $A_2$ explores the entire graph $G\setminus (u,v)$ as it never waits for the missing edge to reappear. In \textbf{Case 2}, $A_2$ starts a new DFS traversal from $u$ (or $v$). Thus, within the next $8|E|$ rounds $A_2$ either explores $G\setminus (u,v)$ or $A_2$ reaches at the root node. Hence, by $t+16|E|$ rounds, $A_2$ explores the entire graph $G\setminus(u,v)$ as it never waits for the missing edge to reappear.

Now, if the edge $(u,v)$ is present at a round $t'$ such that $t'<t+16|E|$, then $A_1$ moves at least one edge forward. Thus, in at most $16|E|\times8|E|=128|E|^2$ rounds, at least one agent completes the exploration of $\mathcal{G}$.

Each agent $A_i$ maintains the following variables: $A_i.\mathit{ID}$, $A_i.\mathit{success}$, $A_i.\mathit{label}$, $A_i.\mathit{state}$, $A_i.\mathit{pin}$, $a_i.\mathit{pout}$, and $A_i.r$. In each round $r$, the agent uses $O(1)$ memory to store $A_i.\mathit{success}$, $A_i.\mathit{state}$, and $A_i.r$. To store the values of $A_i.\mathit{ID}$, $A_i.\mathit{pin}$, and $A_i.\mathit{pout}$, the agent requires $O(\log n)$ memory per variable, since each represents a node identifier or a port number in an $n$ node graph. The variable $A_i.\mathit{label}$ may increase up to $O(|E|^2)$ over the course of the algorithm. However, since agent $A_i$ only need to store the current value of $A_i.\mathit{label}$, and the maximum label value is bounded by $O(|E|^2)$, it can be stored using $O(\log |E|)$ bits, which is $O(\log n)$ as $|E| \leq n^2$.
Therefore, each agent requires $O(\log n)$ memory throughout the execution of the algorithm. This completes the proof. 
\end{proof}

\begin{lemma}\label{lm:storage_1}
    Our exploration requires storage of $O(\log n)$ at each node.
\end{lemma}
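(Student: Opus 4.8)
The statement to prove is Lemma~\ref{lm:storage_1}: the two-agent exploration strategy of Section~\ref{EXP:2} requires only $O(\log n)$ storage at each node.

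\medskip

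The plan is to enumerate exactly what each agent writes on a whiteboard $wb_v$ and bound the size of each piece. The exploration uses the same DFS bookkeeping as Section~\ref{Exp:1}: when an agent $A_i$ first visits a node $v$ during its current DFS (labelled $A_i.label$), it records the pair $(A_i.pin, A_i.label)$ on $wb_v$, i.e.\ the port through which it entered and the index of the DFS run it is currently executing. So the whiteboard content is $wb_v(A_i.parent, A_i.label)$ for $i \in \{1,2\}$. First I would observe that there are only two agents, so only a constant number of such records are stored per node. Second, $A_i.parent$ is a port number at $v$, hence an integer in $\{0,1,\dots,\delta_v-1\}$ with $\delta_v \le n-1$, so it needs $O(\log n)$ bits. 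Third, I would bound $A_i.label$: as argued in the proof of Theorem~\ref{thm:expwith2}, at least one agent finishes exploration within $O(|E|^2)$ rounds, and each DFS restart is triggered by a distinct episode of progress-blocking, so the number of DFS restarts — and therefore the maximum value of $A_i.label$ ever written — is $O(|E|^2)$. Hence $A_i.label$ fits in $O(\log |E|^2) = O(\log n)$ bits, using $|E| \le n^2$.

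\medskip

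Combining these, each node stores a constant number of pairs, each pair costing $O(\log n) + O(\log n) = O(\log n)$ bits, for a total of $O(\log n)$ bits per node. I would also note that nothing else is written to the whiteboard: the agents carry their state ($success$, $state$, $pin$, $pout$, $r$, $ID$) in their own $O(\log n)$ memory (Theorem~\ref{thm:expwith2}), and the whiteboard is used solely for the DFS first-visit records. That completes the argument.

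\medskip

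The only slightly delicate point — the ``main obstacle'' such as it is — is justifying the $O(|E|^2)$ bound on $A_i.label$ cleanly, i.e.\ making precise that a label increment happens only when a DFS run genuinely completes or is forced to restart, and that these events are charged against the $O(|E|^2)$-round horizon from Theorem~\ref{thm:expwith2}. Once that is pinned down (it mirrors exactly the corresponding bound in the proof of Theorem~\ref{thm:1dynamic}), the rest is routine bit-counting.
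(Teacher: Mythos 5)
Your argument is correct and matches the paper's proof: both bound the whiteboard content at each node to the constant number of per-agent records $(parent, label)$, with the parent port costing $O(\log \delta_v) = O(\log n)$ bits and the label bounded by $O(|E|^2)$ (hence $O(\log n)$ bits since $|E| \le n^2$). Your additional remark justifying the label bound via the round bound of Theorem \ref{thm:expwith2} is a reasonable elaboration of the same reasoning the paper uses implicitly.
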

\begin{proof}
   At each node $v$, agent $a_i$ writes $wb_{v}(a_i.\mathit{parent}, a_i.\mathit{label})$. Since the number of agents is constant, and $a_i.\mathit{parent}$ represents a port number, it can be stored using $O(\log n)$ bits. The value $a_i.\mathit{label}$ is bounded by $O(|E|^2)$, and since $|E| \leq n^2$, this implies $a_i.\mathit{label}$ can be stored using $O(\log |E|) = O(\log n)$ bits. Hence, the information written at each node requires $O(\log n)$ bits. This completes the proof.
\end{proof}

\subsection{The algorithm: multiple DFS with cautious walk}
Since our graph $G$ contains a black hole, following only the exploration strategy described above may lead to the agents being destroyed. Therefore, the agents must proceed with caution. During the execution of our algorithm, the agents employ the cautious walk technique introduced in Section \ref{Sec: Cautious_walk}. Hence, the exploration strategy described earlier is integrated with the cautious walk strategy to address the $1$-BHS problem. We employ a total of $6$ agents to solve this problem. We now present our algorithm.

\noindent\textbf{Description:} Let $\{a_1, a_2, a_1', a_1'', a_2', a_2''\}$ be the six mobile agents, all initially co-located at a single node in the graph. Since our exploration strategy requires $2$ agents, these six agents are partitioned into two groups, $G_1$ and $G_2$, each consisting of three agents. Specifically, the role of $A_1$ (in the exploration strategy) is carried out by the group $G_1$, and the role of $A_2$ is carried out by the group $G_2$. Initially, $G_1$ consists of agents $a_1$, $a_1'$, and $a_1''$, where $a_1$ functions as the leader, and $a_1'$ and $a_1''$ act as the first and second helpers (as per the cautious walk), respectively. Similarly, $G_2$ initially comprises of agents $a_2$, $a_2'$, and $a_2''$, with $a_2$ designated as the leader and $a_2'$, $a_2''$ as the first and second helpers, respectively. It is important to note that, due to the presence of missing edges and depending on the specific execution scenario, the roles of leader and helpers within each group may dynamically change during the course of the algorithm. 

All agents begin from a single node, say $v_r$. Both groups, $G_1$ and $G_2$, initiate a Depth-First Search (DFS) traversal from this node, proceeding with their exploration unless a missing edge is encountered. As the traversal progresses, each group writes its traversal information on the whiteboard associated with every visited node. Note that $G_2$ may initiate a new DFS traversal at a later point during the algorithm’s execution; therefore, it maintains an additional parameter to keep track of this. Formally, the parameters maintained by $G_1$ and $G_2$ on the whiteboard of a node, say $v$, are as follows: 
 \begin{itemize}
    \item $wb_v(G_1.ID, parent)$- This parameter stores the information corresponding to the group $G_1$. It stores the ID of the leader of $G_1$, along with the value of $parent$ port of node $v$ according to the DFS traversal run by $G_1$.
    \item $wb_v(G_2.ID, parent, dfs\_label)$- This parameter stores the information corresponding to the group $G_2$. It stores the ID of the leader of $G_2$, along with the value of $parent$ port of node $v$ according to the DFS traversal run by $G_2$. The DFS number that is being run by $G_2$ is stored in $label$.
 \end{itemize}

 Note that $G_1$ performs exactly one DFS traversal. Thus, it does not require a counter for the number of DFS traversal it has carried out. While, $G_2$ may run multiple DFS traversals, and therefore it maintains a counter for it. This helps to distinguish the information of the previous DFS by $G_2$ from the current DFS by $G_2$. Both the groups $G_1$ and $G_2$ start the DFS traversal by replacing each movement with a cautious walk strategy (as given in Section \ref{Sec: Cautious_walk}) and continue their traversal together unless they encounter a missing edge. On encountering a missing edge, $G_1$ does not deviate (as per the exploration strategy of $A_1$). The group $G_2$ can skip the edge or restart a new DFS traversal when it encounters a missing edge (as per the exploration strategy of $A_2$). However, since the groups $G_1$ and $G_2$ are performing cautious movement, the group $G_2$ can continue its traversal by skipping the missing edge or restarting a new DFS traversal only when all three agents of $G_2$ are present. It may happen that while performing cautious walk by $G_2$, the agents (that belonged to $G_2$) are divided and are stuck on the opposite sides of the missing edge. In this scenario, $G_2$ can not continue its traversal unless all three agents (that are a part of $G_2$) are together. Thus, a change of groups occurs in this case. We provide all the possible cases and their resolution below. Note that this change of groups can never be required when $G_2$ is present at a node and it wants to backtrack via an edge, say $e$, but $e$ is missing. This is due to the fact that $G_2$ does not need to do cautious movement while backtracking from a node. For each of the following cases, we include a corresponding figure, as the group change mechanism is comparatively more complex.
 
 We resolve the following cases for $G_2$ in case it skips an edge in its current DFS, or restart a new DFS while also performing cautious movement. The cases are as follows:
 \begin{itemize}
     \item The first helper $a_2'$ could not move through port $p$ at some round $t$ due to missing edge $(u,v)$. This implies, all the agents of $G_2$ ($a_2$, $a_2'$, and $a_2''$) are present at the node $u$. Thus, $G_2$ can continue its traversal (skip the edge or restart a new DFS traversal) and no change of groups is required in this case. Figure \ref{fig:nochange} describes this case where no change in group configuration is required. Note that in our figures we use $l^{G_1}$, $h_1^{G_1}$, and $h_2^{G_1}$ to denote the leader, first helper, and second helpers of $G_1$ respectively. Analogously, we use $l^{G_2}$, $h_1^{G_2}$, and $h_2^{G_2}$ to denote the leader, first helper, and second helpers of $G_2$.
     
     \item The first helper of $G_2$ ($a_2'$) moves through port $p$ at some round, and at the beginning of the next round, edge $(u,v)$ is missing. This implies, $a_2$ and $a_2''$ are present at the node $u$, and $a_2'$ is present at the node $v$. In this scenario, $G_1$ may visit either of the nodes $u$ or $v$.
     \begin{itemize}
         \item[(2a)] If $G_1$ visits $u$ and it has to move through edge $(u,v)$ that is missing. The group $G_1$ can not deviate from its path, and $G_2$ needs its first helper in order to deviate ($G_2$ can not deviate unless it has both its helpers with it). In this case, $G_2$ can take one helper from $G_1$ and skip the current edge to proceed further. Precisely, the first helper of $G_1$, i.e. $a_1'$, becomes the first helper of $G_2$ and vice versa. $G_1$ understands that its (new) first helper (i.e., old $a_2'$) is at the adjacent node and $a_1''$ has to proceed with its movement through the edge $(u,v)$ until it is successful in doing so. The group $G_2$ can now skip this edge $(u,v)$ and continue its traversal further. Figure \ref{fig:grpchange1} represents this case.
         \item[(2b)] If $G_1$ visits the node $v$ and it has to move through edge $(v,u)$ that is missing, it sees the first helper of $G_2$ (i.e., $a_2'$) present here. The group $G_1$ understands that at the adjacent node $u$, $a_2$ and $a_2''$ are stuck due to the missing edge. In this scenario, entire $G_1$ becomes $G_2$ i.e., (old) $a_1$, $a_1'$, $a_1''$ becomes the (new) $a_2$, $a_2'$, $a_2''$ respectively. Thus, $G_2$ continues with its traversal further. While (old) $a_2'$ becomes the leader of $G_1$ (i.e., (new) $a_1$) and attempts to move through the edge $(v,u)$ in the subsequent rounds. When (new) $a_1$ is successful in reaching at $u$, it meets with (old) $a_2$ which now becomes (new) $a_1'$. The (old) $a_2''$ moves to $v$. Now, the (new) $a_1$ and $a_1'$ again moves through edge $(u,v)$ to re-unite with (old) $a_2''$ and make it the new second helper of $G_1$. In this way, all three agents of (new) $G_1$ re-unite and proceed further according to their traversal strategy. Figure \ref{fig:grpchange2} represents this case.
     \end{itemize}
     \item The second helper of $G_2$ ($a_2''$) moves through port $p$ while the first helper of $G_2$ ($a_2'$) returns to $u$ at some round and then at the beginning of the next round, the edge $(u,v)$ is deleted by the adversary. This case is similar to case $2$. Again $G_1$ may visits $u$ or $v$ during its traversal. The change of roles is similar to that of case $2$ ensuring that $G_2$ can continue its traversal without waiting for the missing edge to appear. It is important to note that if $G_1$ does not visit any of the nodes $u$ or $v$, then $G_2$ can not proceed if it is stuck due to a missing edge.
 \end{itemize}
     
 This algorithm solves the problem of $1$-BHS. Now we proceed with the correctness and analysis of our algorithm.

\begin{figure}[ht!]
\centering
  \includegraphics[width=.7\linewidth]{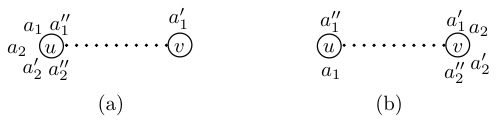} 
  \caption{This figure depicts the Case $1$ of our algorithm. In this case, the leader ($a_1$) and the second helper ($a_1''$) of $G_1$ are present at $u$ while the first helper ($a_1'$) of $G_1$ is present at $v$. The edge $(u,v)$ is missing. All the agents of $G_2$ are together, and when they visit either of the nodes $u$ or $v$ and have to move via the edge $(u,v)$, they skip the edge without the need for group change. Furthermore, for the sake of simplicity, we use $l^{G_1}$, $h_1^{G_1}$, and $h_2^{G_1}$ to denote the leader, first helper, and second helper, respectively, of group $1$. Similarly, $l^{G_2}$, $h_1^{G_2}$, and $h_2^{G_2}$ to denote the leader, first helper, and second helper, respectively, of group $2$.  }\label{fig:nochange}
\end{figure}

\begin{figure}[ht!]
\centering
  \includegraphics[width=.9\linewidth]{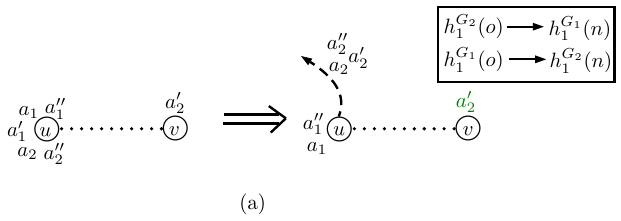} 
  \caption{This figure depicts the Case $(2a)$ of our algorithm. In this case, a change of group happens. When $G_1$ visits $u$ and meets with the $l^{G_2}=a_2$ and $h_2^{G_2}=a_2''$ ($h_1^{G_2}=a_2'$ has moved to $v$ while performing cautious movement and is stuck due to the missing edge $(u,v)$), agents of $G_2$ at $u$ combines with the first helper from $G_1$ (which now becomes the first helper of $G_2$) and proceeds by skipping the edge $(u,v)$. Agents $l^{G_1}$ and $h_2^{G_1}$ waits at $u$ for the missing edge to reappear. The agent at $v$, i.e., $a_2'$ is denoted in green. This represents it is now the first helper of $G_1$ but does not know it yet. Here two agents change their role. The old first helper of $G_2$ ($h_1^{G_2}(o)$) becomes the new first helper of $G_1$ ($h_1^{G_1}(n)$), and vice versa.}\label{fig:grpchange1}
\end{figure}

\begin{figure}[ht!]
\centering
  \includegraphics[width=.9\linewidth]{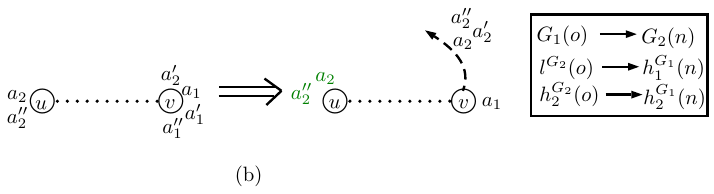} 
  \caption{This figure depicts the Case $(2b)$ of our algorithm. In this case, a change of group happens. When $G_1$ visits $v$ and meets with $h_1^{G_2}=a_2'$, the group $G_1$ now becomes group $G_2$ and $h_1^{G_2}=a_2'$ becomes the new leader of $G_1$ i.e., $a_1$. Agents $l^{G_2}$ and $h_2^{G_2}$ becomes the new first and second helpers of $G_1$ and are denoted in green as they do not know that the change of group has occurred. The agent $a_1$ waits at $v$ for the missing edge $(u,v)$ to reappear. Note that in this case, $a_1$ understands that the adjacent node $u$ is a safe node. Thus, it attempts to move through the edge $(u,v)$ to reach $u$ where its two new helpers are waiting.} \label{fig:grpchange2}
\end{figure}

\subsection{Correctness and Analysis}

\begin{observation}
    Although two different leaders of $G_2$ may exist at a given time, it is easy to observe that the new leader can distinguish itself from the old leader upon meeting, and the execution of $G_2$'s algorithm is continued only by the new leader and not by the old leader of $G_2$.
\end{observation}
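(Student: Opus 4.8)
The plan is to pin down the unique situation in which two agents can simultaneously believe they lead $G_2$, to show that while this situation lasts the stale leader performs no step that affects $G_2$'s traversal, and to exhibit the mechanism by which the stale leader learns of its demotion when it is met. I would first observe that a second ``leader of $G_2$'' is created only by the group change of Case $(2b)$ (and, symmetrically, by the analogous sub-case of Case $3$): when $G_1$ reaches node $v$ and finds the first helper $a_2'$ of $G_2$ there, the three agents currently at $v$ (old $a_1, a_1', a_1''$) take over the identity of $G_2$, with old $a_1$ becoming the new $G_2$-leader, whereas the old $G_2$-leader $a_2$ --- still stranded at the adjacent node $u$ together with the old second helper $a_2''$, because the edge $(u,v)$ on which $G_2$ was running its cautious walk is missing --- has received no notice of this and keeps regarding itself as $G_2$'s leader. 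Thus the only pair that can at once claim $G_2$'s leadership is (old $a_1$, old $a_2$), and it persists only from the round of the group change until old $a_2$ is physically reached again.

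Next I would argue that throughout this interval the stale leader $a_2$ executes no step of $G_2$'s algorithm: it is blocked at $u$ precisely because $(u,v)$ is absent, and by the cautious-walk protocol a leader in this position simply waits, re-attempting the same edge, so $a_2$ neither advances a DFS, nor writes fresh traversal data on any whiteboard, nor starts a new cautious walk. Meanwhile the new leader old $a_1$, with old $a_1'$ and $a_1''$, carries $G_2$'s DFS forward exactly as agent $A_2$ does in the two-agent exploration strategy of Section~\ref{EXP:2} (skipping $(u,v)$ or restarting its DFS), writing its progress on the visited whiteboards with $G_2.ID$ now set to the ID of old $a_1$. So, operationally, $G_2$'s algorithm is run only by the new leader.

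Finally I would close the distinguishability claim. The coexistence necessarily ends the first round $(u,v)$ reappears: old $a_2'$, who waited at $v$ (it is now the $G_1$-leader, and $G_1$ never deviates), crosses to $u$ and --- having itself witnessed the hand-over --- informs old $a_2$ and old $a_2''$ of their new roles (first and second helpers of $G_1$), so old $a_2$ discards its stale claim. More generally, every agent that took part in or witnessed the hand-over (old $a_1, a_1', a_1''$ and old $a_2'$) knows the updated assignment, and the whiteboards visited since the hand-over record the current $G_2$-leader's identifier; hence whenever the two $G_2$-leaders, or one of them and such a witness, meet at a node, the stale leader is recognized and defers, and execution of $G_2$'s algorithm continues only through the new leader old $a_1$.

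The step I expect to require the most care is the bookkeeping that rules out the stale leader ever acting on $G_2$ before learning of its demotion --- in particular, checking that in the round $(u,v)$ reappears the stranded leader $a_2$ itself remains at $u$ (only its former second helper $a_2''$ and the incoming $G_1$-leader $a_2'$ move across the edge, which is exactly the intended exchange), so no inconsistent action is taken, and verifying that no unbounded chain of stale $G_2$-leaders can build up, since between two consecutive group changes the configuration returns to the clean three-agents-per-group state.
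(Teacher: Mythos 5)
Your argument is correct and follows essentially the same reasoning the paper relies on: the paper gives no explicit proof of this observation, and its implicit justification is exactly the Case~(2b) (and analogous Case~3) hand-over description plus the figure captions, i.e., the stale leader is blocked at $u$ merely waiting out its cautious walk, while the witnesses of the hand-over, the unique agent IDs, the leader ID stored in $wb_v(G_2.ID,\cdot,\cdot)$, and the eventual crossing of the reappeared edge by the new $G_1$-leader (old $a_2'$) let the stale leader be recognized and demoted on meeting. Your reconstruction just makes this bookkeeping explicit; the only small mismatch is your parenthetical that the new $G_2$ may "skip $(u,v)$" in Case~(2b), whereas the paper states that in this case the edge is not skipped, but this detail does not affect the distinguishability claim.
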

\begin{theorem}\label{thm:maincorrectness}
    At least one of the groups $G_1$ or $G_2$ explores the graph $G$ correctly in $O(|E|^2)$ rounds.
\end{theorem}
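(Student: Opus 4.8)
The plan is to mirror, at the level of groups of three agents, the correctness argument already established for two single agents in Theorem \ref{thm:expwith2}, and to show that the group-change mechanism described above makes the group-level execution faithfully simulate the two-agent exploration strategy. Concretely, I would argue that the group $G_1$ always plays the role of $A_1$ (it never deviates from the DFS path $\mathcal{P}$), the group $G_2$ always plays the role of $A_2$ (it is never blocked by a missing edge), and that each single move of $A_1$ or $A_2$ in the exploration strategy is implemented by the corresponding group in $O(1)$ rounds via cautious walk. Granting these three facts, Theorem \ref{thm:expwith2} applies verbatim with $|E|$ replaced by the same quantity up to a constant factor, so one of $G_1$, $G_2$ finishes exploration in $O(|E|^2)$ rounds.

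The first step is to establish that $G_1$ never deviates: whenever $G_1$ sits at a node $u$ with a missing edge $(u,v)$ on its DFS path, the only interactions that could force a change involve another group visiting $u$ or $v$, and in all of Cases $(2a)$, $(2b)$ and the analogue of Case $3$ the roles are reassigned so that the \emph{leader that continues along $\mathcal{P}$} is always the one inheriting the $G_1$ label (in $(2b)$ the old $G_1$ becomes $G_2$, and the agent waiting to cross $(u,v)$ becomes the new $G_1$ leader, still sitting on $\mathcal{P}$ at an endpoint of the missing edge). Hence the $G_1$-labeled group's whiteboard trail on the nodes of $\mathcal{P}$ is exactly the trail $A_1$ would leave, and by Observation \ref{lem:A1_explore} it is blocked by at most one missing edge at a time. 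The second step is the dual statement for $G_2$: I would check that in each of the enumerated cases, once $G_2$ (possibly after a relabel) has all three of its agents co-located, it can skip the missing edge or restart its DFS exactly as $A_2$ would, and that the relabeling always terminates in $O(1)$ rounds because the only obstruction to re-uniting the three $G_2$-agents is the missing edge $(u,v)$, which $G_2$ is in the process of skipping rather than waiting on. Crucially, when $G_1$ does not visit $u$ or $v$ at all, $G_2$ is stuck — but this matches precisely the two-agent situation where $A_1$ waits and $A_2$ has already crossed, and the counting argument of Theorem \ref{thm:expwith2} never relied on $A_2$ making progress in exactly those rounds; it only needs that \emph{within} the $8|E|$-round windows $A_2$ is somewhere on its own DFS. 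So I would phrase the invariant as: at every round, the $G_2$-labeled group either is executing its DFS (all three agents together) or is in a bounded relabeling phase of length $O(1)$, and the relabeling phase is entered only finitely often per missing-edge event.

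With these invariants in hand, the final step is the timing bookkeeping: each DFS move of $A_i$ becomes a cautious walk of $G_i$ costing a constant number of rounds (three successful crossings of the edge, interleaved with the odd/even round discipline), plus at most one $O(1)$ relabeling. Substituting into the proof of Theorem \ref{thm:expwith2}, the windows $8|E|$, $16|E|$, $128|E|^2$ are each inflated by a constant, so at least one of $G_1$, $G_2$ completes the exploration of $\mathcal{G}$ (equivalently, of $\mathcal{G}\setminus\{e\}$ for the offending edge $e$) within $O(|E|^2)$ rounds. The main obstacle I anticipate is the case analysis for the group-change mechanism: one must verify carefully that in every execution branch the \emph{identity} of which group is ``$G_1$'' and which is ``$G_2$'' is consistently maintained by the agents from the whiteboard and the IDs they see, that no agent is ever orphaned (left without a group), and that the new leader of $G_2$ correctly supersedes the old one (this is exactly the content of the preceding Observation). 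Once that bookkeeping is pinned down, the reduction to Theorem \ref{thm:expwith2} is routine.
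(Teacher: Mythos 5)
Your overall plan---let $G_1$, $G_2$ simulate $A_1$, $A_2$ with each edge move implemented by an $O(1)$-round cautious walk, verify the group-change cases, and then invoke Theorem \ref{thm:expwith2}---is exactly the route the paper takes. But two of the invariants you declare and then lean on (``granting these three facts, Theorem \ref{thm:expwith2} applies verbatim'') are stronger than what the algorithm guarantees, and the paper's own proof does not claim them. First, in Case (2b) the simulation is \emph{not} faithful: the three old $G_1$ agents at $v$ are relabelled as the new $G_2$ and continue a traversal, while the old first helper of $G_2$ becomes the new $G_1$ leader. The paper explicitly concedes that this case ``leads to a deviation from the original paths of $A_1$ and $A_2$'' and argues only that the outcome is at least as good (the edge is not skipped and $G_2$ is no longer blocked), not that the group-level trails coincide with the single-agent trails. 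Your claim that the new $G_1$ leader is ``still sitting on $\mathcal{P}$'' with $A_1$'s whiteboard trail therefore needs this weaker, case-specific argument rather than a verbatim reduction.

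Second, your invariant that at every round the $G_2$-labelled group is either executing its DFS with all three agents co-located or in an $O(1)$-round relabelling phase is false: when $G_2$ is split across the missing edge and $G_1$ is elsewhere, $G_2$ remains blocked for an unbounded number of rounds, until either the edge reappears or $G_1$'s own DFS brings it to an endpoint of that edge and requires crossing it---which is the only trigger for Cases (2a)/(2b); the algorithm itself notes that if $G_1$ never visits $u$ or $v$, $G_2$ cannot proceed. Hence Observation \ref{lem:A1_explore} (``$A_2$ is never blocked'') does not transfer to $G_2$, and the $8|E|$/$16|E|$ window bookkeeping of Theorem \ref{thm:expwith2} cannot simply be inflated by a constant; the timing argument must account for the fact that $G_2$'s unblocking is mediated by $G_1$'s progress along $\mathcal{P}$ (which works out because only one edge is ever missing, so $G_1$ reaches the blocked edge within $O(|E|)$ of its own moves, but this has to be said). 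With these two repairs your argument coincides with the paper's proof.
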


\begin{proof}
    The algorithm partitions the agents into two groups, $G_1$ and $G_2$, each consisting of three agents. These groups are designed to replicate the behavior of agents $A_1$ and $A_2$, respectively, as described in the exploration strategy. According to Lemma \ref{thm:expwith2}, agents $A_1$ and $A_2$ correctly explore the dynamic graph. Since $G_1$ and $G_2$ operate using cautious walk, any action such as skipping a missing edge or initiating a new DFS traversal requires all three members of a group to be together. In Case 1 of the algorithm, if $G_2$ remains intact while $G_1$ is blocked by a missing edge $(x, y)$, and $G_2$ subsequently visits node $x$ or $y$, no group change is necessary; $G_1$ and $G_2$ continue their traversal and the movement by $G_1$ and $G_2$ is same as that of $A_1$ and $A_2$, respectively. In Case $2(a)$, if the first helper of $G_2$ is stuck at $y$ and the leader and second helper are at $x$, and $G_1$ reaches $x$, a group change occurs. This allows $G_2$ to skip the edge $(x, y)$ and continue its exploration, while $G_1$ waits at $x$ for the edge to reappear. In this scenario, the effective movements of $G_1$ and $G_2$ still align with those of $A_1$ and $A_2$. In Case $2(b)$, if $G_1$ visits node $y$, a new $G_2$ is formed at $y$ that comprises of all the three (old) members of $G_1$. $G_2$ continues its traversal. And the (old) first helper of $G_2$ becomes the (new) leader of $G_1$. In this case, edge $(x,y)$ is not skipped by $G_2$. Although this case leads to a deviation from the original paths of $A_1$ and $A_2$, the result is preferable since the edge is not skipped and $G_2$ is no longer blocked. As Theorem \ref{thm:expwith2} guarantees that at least one of $A_1$ and $A_2$ visit all nodes in the graph in $O(|E|^2)$ rounds, and since $G_1$ and $G_2$ simulate their behavior through cautious walk, every node is visited by at least one of the groups. This proves that at least one of $G_1$ or $G_2$ explores the graph $\mathcal{G}$ in $O(|E|^2)$ rounds.
\end{proof}

\begin{theorem}
    The problem of $1$-BHS can be solved by $6$ agents starting from a rooted initial configuration in $O(|E|^2)$ rounds where each agent has $O(\log n)$ bits of storage. Furthermore, each node requires only $O(\log n)$ bits of storage.
\end{theorem}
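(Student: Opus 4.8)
The plan is to assemble the theorem from two ingredients already in place: the exploration guarantee of Theorem~\ref{thm:maincorrectness}, which says that at least one of the groups $G_1$ or $G_2$ visits every node of $G$ within $O(|E|^2)$ rounds (each edge traversal of the two-agent strategy of Section~\ref{EXP:2} being simulated by an $O(1)$-round cautious walk of a three-agent group), and the safety property of the cautious walk from Section~\ref{Sec: Cautious_walk}. First I would observe that since the black hole $BH$ is a node of $G$, it lies on the DFS path that some group $G_i$ (possibly after the role/group reassignments of Cases~$1$, $2(a)$, $2(b)$) attempts to follow, so there is a first moment at which a three-agent group tries to cross an unexplored edge $(u,BH)$ from a safe node $u$. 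At that moment the cautious-walk protocol applies verbatim: the first helper enters $BH$ first, then the second helper follows (while the first helper tries to return) once the edge reappears — which it does, by $1$-interval connectivity — and the leader crosses only after the first helper has actually returned. Hence the leader never enters $BH$; it detects that the first helper fails to return even after the second helper has left, concludes that the port it was about to use leads to the black hole, and terminates. This solves $1$-BHS, and by Theorem~\ref{thm:maincorrectness} it happens within $O(|E|^2)$ rounds.

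The main obstacle I expect is showing that the dynamic reassignment of agents between $G_1$ and $G_2$ never breaks this safety argument — concretely, that after any sequence of group changes a three-agent group that must cross a genuinely new edge still has (or, once the relevant missing edge reappears, will have) one leader and two helpers co-located, so the cautious walk is always executable, and that in total at most two agents are lost at $BH$ before some surviving leader detects it. I would argue this by a case analysis mirroring the algorithm description: in Case~$1$ no reassignment is needed since all of $G_2$ is together; in Case~$2(a)$ exactly one helper is swapped between $G_1$ and $G_2$, so each group still has a leader and recovers its second helper when the edge reappears; in Case~$2(b)$ the whole old $G_1$ becomes the new $G_2$ while the old first helper of $G_2$ becomes a fresh leader of $G_1$ that re-collects two helpers across the reappearing edge. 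Since a group performs a cautious walk only when crossing a new edge (never when backtracking), since $G_1$ never deviates from its DFS path $\mathcal{P}$ and $G_2$ is never permanently blocked (Theorem~\ref{thm:maincorrectness}), at least one group completes its exploration-with-caution; and because each such group loses at most two agents at $BH$, at least one leader survives and terminates.

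For the resource bounds I would proceed exactly as in Theorem~\ref{thm:expwith2} and Lemma~\ref{lm:storage_1}. Each agent stores its DFS parameters ($ID$, $success$, $label$, $state$, $pin$, $pout$, $r$): $label$ is bounded by $O(|E|^2)$ and hence fits in $O(\log|E|)=O(\log n)$ bits (as $|E|\le n^2$), and the remaining parameters are port numbers, identifiers, or constants, all $O(\log n)$ bits; in addition each agent records its current role (leader, first helper, or second helper), its current group index, and the IDs of its current groupmates, which is $O(\log n)$ since there are only six agents with $O(\log n)$-bit identifiers. For node storage, each whiteboard holds $wb_v(G_1.ID, parent)$ and $wb_v(G_2.ID, parent, dfs\_label)$, where the leader IDs and port numbers take $O(\log n)$ bits and $dfs\_label\le O(|E|^2)$ again fits in $O(\log n)$ bits, so $O(\log n)$ storage per node suffices. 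Combining the exploration time $O(|E|^2)$, the cautious-walk safety guarantee, and these bounds yields the theorem. I do not anticipate difficulty with the memory and storage accounting; the delicate part is the invariant that no group ever loses more than two agents at $BH$ through the group-change mechanism.
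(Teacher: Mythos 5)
Your proposal is correct and follows essentially the same route as the paper: it combines the group-exploration guarantee of Theorem~\ref{thm:maincorrectness} with the safety of the cautious walk to conclude detection of the black hole within $O(|E|^2)$ rounds, and then repeats the memory/storage accounting of Theorem~\ref{thm:expwith2} and Lemma~\ref{lm:storage_1}. The additional case analysis you sketch for the group-change invariant is precisely the content the paper delegates to the proof of Theorem~\ref{thm:maincorrectness}, so nothing essentially new or divergent is introduced.
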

\begin{proof}
    The groups $G_1$ and $G_2$ replicates the movement strategy of $A_1$ and $A_2$ (as per the exploration strategy) by changing each edge movement with the cautious movement strategy. According to Theorem \ref{thm:maincorrectness}, at least one of $G_1$ and $G_2$ explores $G$ in $O(|E|^2)$ rounds. Since both the groups maintain cautious walk throughout the algorithm, black hole is detected either by $G_1$ or $G_2$ in $O(|E|^2)$ rounds.

    In Theorem \ref{thm:1dynamic}, we show that each agent utilizes $O(\log n)$ bits of memory to store all the parameters. In our algorithm, agents need to remember the leader and the two helpers that are a part of their group. This can also be accomplished using $O(\log n)$ memory to solve 1-BHS.

    At each node $v$, agent $a_i$ writes the $parent$ information corresponding to its DFS traversal. This requires $O(\log \Delta)$ storage, where $\Delta$ is the maximum degree of a node in graph $G$. Further, $label$ parameter written on the whiteboard by group $G_2$ is bounded by $O(|E|^2)$, and thus, $label$ parameter can be maintained on the whiteboard using $O(\log n)$ bits. Hence, our algorithm requires $O(\log n)$ storage at each node of the graph. This completes the proof.
\end{proof}

\section{Conclusion}
We propose an algorithm for the $1$-BHS problem using $6$ agents and another algorithm for the $f$-BHS problem using $6f$ agents. 
Both the problems are open to be studied from arbitrary initial configuration. 
Studying the problem of knowing the exact location of the black hole on arbitrary graphs or even on any graph class, i.e., knowing all the edges leading to the black hole can also be studied.


\bibliography{Bib}

\begin{thebibliography}{10}
\expandafter\ifx\csname url\endcsname\relax
  \def\url#1{\texttt{#1}}\fi
\expandafter\ifx\csname urlprefix\endcsname\relax\def\urlprefix{URL }\fi
\expandafter\ifx\csname href\endcsname\relax
  \def\href#1#2{#2} \def\path#1{#1}\fi

\bibitem{Albers_1997}
S.~Albers, M.~R. Henzinger, Exploring unknown environments, {SIAM} J. Comput. 29~(4) (2000) 1164--1188.

\bibitem{dynamic_exploration}
G.~A.~D. Luna, S.~Dobrev, P.~Flocchini, N.~Santoro, Distributed exploration of dynamic rings, Distributed Comput. 33 (2020) 41--67.

\bibitem{DPP14}
Y.~Dieudonn{\'{e}}, A.~Pelc, D.~Peleg, Gathering despite mischief, {ACM} Trans. Algorithms 11~(1) (2014) 1:1--1:28.

\bibitem{LunaTCS2020}
G.~A.~D. Luna, P.~Flocchini, L.~Pagli, G.~Prencipe, N.~Santoro, G.~Viglietta, Gathering in dynamic rings, Theor. Comput. Sci. 811 (2020) 79--98.

\bibitem{Augustine_2018}
J.~Augustine, W.~K.~M. Jr., Dispersion of mobile robots: {A} study of memory-time trade-offs, in: ICDCN, 2018, pp. 1:1--1:10.

\bibitem{dynamic_dispersion}
A.~D. Kshemkalyani, A.~R. Molla, G.~Sharma, Efficient dispersion of mobile robots on dynamic graphs, in: {ICDCS}, 2020, pp. 732--742.

\bibitem{Paola_2006}
S.~Dobrev, P.~Flocchini, G.~Prencipe, N.~Santoro, Searching for a black hole in arbitrary networks: optimal mobile agents protocols, Distributed Comput. 19~(1) (2006) 1--35.

\bibitem{Paola_2004}
S.~Dobrev, P.~Flocchini, N.~Santoro, Improved bounds for optimal black hole search with a network map, in: SIROCCO, 2004, pp. 111--122.

\bibitem{dyn_ring_jrnl}
G.~A.~D. Luna, P.~Flocchini, G.~Prencipe, N.~Santoro, Locating a black hole in a dynamic ring, J. Parallel Distributed Comput. 196 (2025) 104998.

\bibitem{bhattacharya_2023}
A.~Bhattacharya, G.~F. Italiano, P.~S. Mandal, Searching for a black hole in a dynamic cactus, Journal of Graph Algorithms and Applications 29~(2) (2025) 127–166.

\bibitem{Adri_tori}
A.~Bhattacharya, G.~F. Italiano, P.~S. Mandal, Black hole search in dynamic tori, in: {SAND}, Vol. 292, 2024, pp. 6:1--6:16.

\bibitem{Pelc_2005}
J.~Czyzowicz, D.~R. Kowalski, E.~Markou, A.~Pelc, Searching for a black hole in synchronous tree networks, Comb. Probab. Comput. 16~(4) (2007) 595--619.

\bibitem{Paola_2010}
B.~Balamohan, P.~Flocchini, A.~Miri, N.~Santoro, Time optimal algorithms for black hole search in rings, Discret. Math. Algorithms Appl. 3~(4) (2011) 457--472.

\bibitem{Shantanu_2011}
J.~Chalopin, S.~Das, A.~Labourel, E.~Markou, Black hole search with finite automata scattered in a synchronous torus, in: DISC, 2011, pp. 432--446.

\bibitem{MarkouP12}
E.~Markou, M.~Paquette, Black hole search and exploration in unoriented tori with synchronous scattered finite automata, in: OPODIS, 2012, pp. 239--253.

\bibitem{complexity_black_hole}
J.~Czyzowicz, D.~R. Kowalski, E.~Markou, A.~Pelc, Complexity of searching for a black hole, Fundam. Informaticae 71~(2-3) (2006) 229--242.

\bibitem{bhs_tokens}
S.~Dobrev, N.~Santoro, W.~Shi, Using scattered mobile agents to locate a black hole in an un-oriented ring with tokens, Int. J. Found. Comput. Sci. 19~(6) (2008) 1355--1372.

\bibitem{Das_2019}
S.~Das, Graph explorations with mobile agents, in: Distributed Computing by Mobile Entities, Current Research in Moving and Computing, 2019, pp. 403--422.

\bibitem{Nicolo_21}
T.~Gotoh, P.~Flocchini, T.~Masuzawa, N.~Santoro, Exploration of dynamic networks: Tight bounds on the number of agents, Journal of Computer and System Sciences 122 (2021) 1--18.

\bibitem{saxena_2024}
A.~Saxena, T.~Kaur, K.~Mondal, Dispersion on time varying graphs, in: {ICDCN}, {ACM}, 2025, pp. 269--273.

\end{thebibliography}

\end{document}